\declaretheorem[parent=section]{theorem}
\declaretheorem[numberlike=theorem, name=Lemma]{lemma}
\declaretheorem[numberlike=theorem, name=Corollary]{corollary}
\def\tr{\operatorname{tr}}
\def\ket#1{|#1\rangle}
\def\bra#1{\langle #1|}
\def\c2{\ensuremath{C^{(2)}}}
 \pgfplotsset{compat=1.18}
\begin{document}

\title{Reliable Entropy Estimation from Observed Statistics for Device-Independent Quantum Cryptography}

\date{\today}
\author{Gereon Koßmann}
\affiliation{Institute for Quantum Information, RWTH Aachen University, Aachen, Germany}
\author{René Schwonnek}
\affiliation{Institut f\"{u}r Theoretische Physik, Leibniz Universit\"{a}t Hannover, Germany}
\begin{abstract}
    We introduce a numerical framework for reliably estimating conditional von Neumann entropies, a central quantity in device-independent quantum cryptography and randomness generation. Our method is based on semidefinite relaxations derived from the Navascués--Pironio--Acín hierarchy and provides entropy bounds directly from observed statistics, assuming only the validity of quantum mechanics. The approach builds on a recent integral representation of entropy and substantially improves efficiency: it requires about half as many support variables as existing methods, while allowing these variables to be chosen projectively, leading to a significant reduction in runtime. These advances enable the practical certification of randomness and security even in noisy conditions, while integrating seamlessly with modern entropy accumulation theorems. Consequently, our framework becomes a versatile tool for quantum cryptographic protocols, broadening the possibilities for secure communication in untrusted environments.
\end{abstract}
\maketitle

\section{Introduction}

Almost a century ago, Max Born \cite{born1926quantenmechanik} famously suggested that the unpredictability observed in individual quantum measurements cannot always be explained by an incomplete knowledge about the experimental setup alone. Instead, this unpredictability can be an intrinsic property of nature itself—that is, true randomness exists. This idea is strikingly demonstrated by a successful Bell test.

Regarded from the perspective of quantum mechanics, a maximal violation of a Bell inequality like CHSH \cite{Clauser1969} demonstrates two key points. On one hand, it rules out the existence of hidden variables that could secretly determine a outcome. On the other, it shows that the correlated yet individually unpredictable measurement outcomes—emerging from a pure, maximally entangled state—are truly random and uncorrelated with any third party not involved in the experiment.

Today, Bell-type correlation experiments that are capable of closing a wide range of loopholes have moved beyond mere thought experiments—they are now a practical reality (see \cite{Rauch2018,Aspect1982,Schmied2016,Giustina2013,Zhang2022,Hensen2015} and references therein).
However, since all physical experiments must contend with noise and decoherence, a maximal violation indicative of a pure state will most likely remain an idealized model.
This necessitates the task of characterizing and certifying the amount of randomness present in  observed noisy real-world data—an issue we address in this work.

Recognizing that the outcomes of a quantum measurement can be genuinely random marked not only a fundamental shift in our understanding of the world but also, much later, paved the way for the development of numerous information-theoretic protocols that would be impossible to achieve under the framework of any classical theory \cite{RENNER2008}.
Most prominently, the field of quantum cryptography evolved from this, which by today ranges from the development of near-commercial technologies to foundational research pushing the limits of current capabilities.

\begin{figure}
    \centering
    \vspace{1.5cm}
    \includegraphics[width=0.9\linewidth]{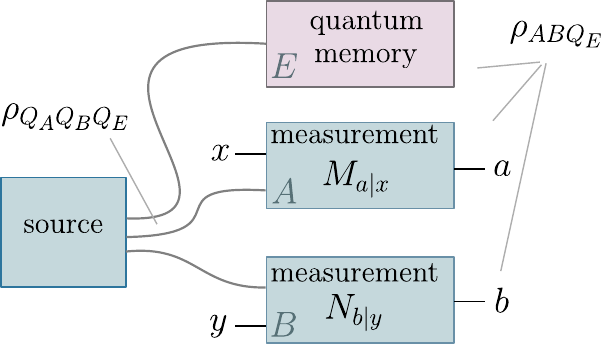}
    \caption{\justifying Basic setting of a quantum mechanical correlation experiment for the generation of randomness or secret key. A source distributes a quantum state $\rho_{Q_A Q_B Q_E}$ between two legit users Alice and Bob, and an adversary Eve. Depending on inputs $x$ and $y$, Alice and Bob perform measurements resulting in classical random variables $A$ and $B$ to which Eve holds quantum side information. Our main focus is to estimate the securely asymptotically extractable randomness of $A$ and/or $B$ based on only considering the observed probability distributions     $p(a,b\vert x,y)$. }
    \label{fig:setting}
\end{figure}

Recent advancements in quantum cryptography, particularly in device-independent quantum key distribution (DIQKD) and randomness extraction (see e.g. \cite{Acn2006,Acn2007} and references therein), harness the intrinsic unpredictability of quantum systems to secure communication against eavesdroppers. DIQKD protocols rely on the principle that correlations observed in measurement outcomes from entangled quantum systems cannot be explained by any local hidden variable theory, as shown by violations of Bell inequalities.

This fundamental characteristic implies that the conditional distribution
\begin{align}\label{eq:non_local_distribution}
    p(a,b\vert x,y),
\end{align}
see Fig. \ref{fig:setting},
cannot be reproduced if $a$ and $b$ were deterministic outputs available to an eavesdropper for each input pair $(x,y)$, with values governed by a shared random variable, such as a shared coin toss. Instead, quantum non-locality—revealed through such correlations—ensures that outcomes cannot be explained by classical models. This establishes true randomness as a consequence of quantum theory.
The DIQKD approach thus offers an additional layer of security by requiring no trust in the measurement devices themselves. Given this framework, device-independent protocols make the accurate bounding of entropy under non-asymptotic conditions a central challenge \cite{ArnonFriedman2019}. The need to rigorously quantify the randomness of measurement outcomes—especially in real-world scenarios where noise and finite sample sizes introduce deviations—underscores the importance of non-asymptotic entropy bounds. Such bounds provide critical insights into the level of security achievable in device-independent settings, where only the observed distribution and the validity of quantum theory are assumed \cite{Dupuis2020,Metger2022} (of course in an actual experiment many more assumptions on the devices come into play; we restrict here on the mathematical framework). As research progresses, these entropy bounds are becoming essential for verifying and implementing DIQKD protocols in practical cryptographic systems, thereby expanding the possibilities for secure communication in untrusted environments \cite{Liu_2021,Liu_2022,Liu_2023}.

In this work, we contribute to the field of device-independent cryptography by introducing a numerical method that provides provable lower bounds on the conditional von Neumann entropy, which quantifies the asymptotic amount of usable randomness extractable from a given experiment. Specifically, we bound the entropy
\begin{align}\label{eq:conditional_von_neumann} H(A \vert Q_E)_{\rho_{AQ_E}} \coloneqq S(\rho_{A Q_E}) - S(\rho_{Q_E}), \end{align}
where $A$ is a classical random variable held by Alice and $Q_E$ is a quantum system held by Eve.
Our method enables certification of randomness using only the observed measurement outcomes from Alice and Bob, under the sole assumption that the experiment is governed by quantum theory. The core idea of our method is to transform a novel class of integral representations of the relative entropy \cite{Frenkel2023} into a sequence of non-commutative optimization problems. These problems can be approximated using the Navascués-Pironio-Acín (NPA) hierarchy of semidefinite programs \cite{Navascus2008,Ligthart_2023}. While the approximation is still computationally demanding for larger systems, our method offers a significant improvement in efficiency compared to previous approaches.

Computing conditional von Neumann entropies in the device independent setting has been an outstanding open problem for quite some time \cite{ArnonFriedman2019}. Only recently the first methods for this problem were developed by Tan et al. \cite{Tan_2021} and  by Brown, Fawzi, and Fawzi \cite{Brown2024}, shortly after. 
Both are suffering from serious performance issues when it comes to analyse protocols with more than an minimal amount of inputs and outputs on each side. This  limitation is due the fact that both methods require the simultaneous analysis of many non-commutative variables and/or high-degree polynomials.
We have the capability to fix this. In its most economical version, our method can give direct estimates on the von Neumann entropy, whilst requiring minimal computational resources. In detail we need the same amount of resources as in the NPA-based computation of min-entropies \cite{Pironio_2010}, which is a lower but usually suboptimal bound to the von Neumann entropy. Nevertheless, if more resources can be allocated we can enhance the approximation quality and convergence speed to the optimum.

In general, the demand for such numerical tools has increased recently evermore due to significant progress in entropy accumulation theorems \cite{Dupuis2020,Metger2022}, which provide concrete solutions for randomness extraction in finite-size regimes by reducing it to the asymptotic quantity in \eqref{eq:conditional_von_neumann}. Thus, after applying an entropy accumulation theorem, the remaining task is to calculate provable lower bounds for the conditional von-Neumann entropy \eqref{eq:conditional_von_neumann} for specific noise models and input-output data scenarios. Furthermore, by confining the analysis to projective operators, our method achieves a high degree of computational efficiency, making it suitable for real-world applications.

This paper is structured as follows: In section Sec.~\ref{sec:bounding_von_neumann}, we will present the underlying mathematical framework for our method.  We will introduce and analyse an integral representation of the underlying optimization problem and introduce approximations, that are linear in the underlying state, see Thm.~\ref{thm:main_theorem}. This enables us to formulate the asymptotically extractable randomness as a non-commutative polynomial optimization problem, Thm.\ref{thm:npo_formulation}.
The following sections are dedicated to applications. In Sec.\ref{sec:applications_numerical_examples} we will revisit the well known  2222-setting. Here, Alice has 2 measurements with 2 outcomes, and Bob has 2 measurements with 2 outcomes.  In section Sec.~\ref{sec:beyondchsh} we will focus on settings with more measurements and outcomes. In detail, we will compute the asymptotically extractable randomness for settings in which the CGLMP inequality and the I3322 are measured. For these cases, previous numerical methods do not longer perform well \cite{riveradean2024deviceindependentquantumkeydistribution}. Our method however does, and we demonstrate that direct bounds on the von Neumann entropy certify much higher randomness extraction rates than a comparable ansatz based on the min-entropy. As a last demonstration, presented in section Sec.~\ref{sec:experiment} we will test our method on real world experimental data. The data comes from a test run of the experiment \cite{Zhang2022}, which firstly demonstrated the feasibility of DI-QKD with distant users. We show that using the statistics of more than 2 measurements can actually increase the amount of certifiable randomness when compared to a simple test of the CHSH inequality \cite{Pironio_2010}.    

\section{Bounding the von-Neumann entropy}\label{sec:bounding_von_neumann}
In the following we denote by $\mathcal{B}(\mathcal{H})$ the set of (bounded) linear operators on a Hilbert space $\mathcal{H}$ and $\mathcal{S}(\mathcal{H})$ the set of quantum states on $\mathcal{H}$, i.e. all positive operators with unit trace. The trace on $\mathcal{B}(\mathcal{H})$ is denoted as $\operatorname{tr}[\cdot]$. The positive semidefinite cone is denoted by $\mathcal{P}(\mathcal{H})$. Moreover, any self adjoint operator $A \in \mathcal{B}(\mathcal{H})$, can be uniquely decomposed as a difference $A=A^+-A^-$ of Hilbert-Schmidt orthogonal positive operators $A^+$ and $A^-$. Let $\tr^+[A]\coloneqq\tr[A^+]$ denote the trace of the positive part of $A$ (similarly $\tr^-[A]\coloneqq  \tr[A^-] = \tr^+[-A]$). Note that this is an SDP given by
\begin{equation}\label{eq:variational_tr_plus}
\begin{aligned}
    \operatorname{tr}^+[A] = \sup \ &\tr[PA] \\
    \operatorname{s.t.} \ &0\leq P \leq \mathds{1}.
\end{aligned}
\end{equation}
Our approach for bounding the conditional von-Neumann entropy builds on a recently developed integral representation of the relative entropy\footnote{$\log$ is the logarithm to the base $2$ and $\ln$ the natural logarithm.}
\begin{equation}\label{eq:def_relative_entropy}
\begin{aligned}
    D(\rho \Vert \sigma) \coloneqq  \begin{cases}
        \tr[\rho \log \rho - \rho \log \sigma] & \ker[\sigma] \subseteq \ker[\rho] \\
        + \infty & \text{else}
    \end{cases}
\end{aligned}
\end{equation}
by P. E. Frenkel \cite{Frenkel2023}. For two positive operators $\sigma,\rho \in \mathcal{P}(\mathcal{H})$ acting on a Hilbert space $\mathcal{H}$ the r.h.s. of \eqref{eq:def_relative_entropy} can be expressed by the integration
\begin{align}\label{eq:integral_frenkel}
    D(\rho \Vert \sigma) = \frac{1}{\ln 2}\big(\tr[\rho - \sigma] + \int_{-\infty}^{+\infty} dt \frac{\tr^-[(1-t)\rho + t\sigma]}{\vert t \vert (t-1)^2} \big).
\end{align}
It was shown by Jenčová \cite{Jencova2024}, that the formulation \eqref{eq:integral_frenkel} also holds for infinite-dimensional yet separable Hilbert spaces. It can be further rewritten in a form that is more convenient for our purposes (see \autoref{lem:jencova_trace_class})
\begin{equation}\label{eq:integral_jencova}
\begin{aligned}
    D(\rho \Vert \sigma) &= \frac{1}{\ln 2} \big(\tr[\rho - \sigma] + \int_\mu^\lambda \frac{ds}{s} \tr^+[\sigma s- \rho] \\
    &\hspace{1cm} + \tr[\rho] \ln\lambda - (\lambda-1)\tr[\sigma] \big).
\end{aligned}
\end{equation}
Here, the boundaries of the integration range $\mu,\lambda \in \mathbb{R}_{\geq 0}$  have to obey the operator valued inequality
\begin{align}
    \mu \sigma \leq \rho \leq \lambda \sigma.
\end{align}
In the DI settings we want to analyse, the states $\rho$ and $\sigma$ of interest arise from the protocol under consideration, here a generic choice for valid $\lambda$ and $\mu$ can be given.

For the remainder of this work, we assume  tripartite experiments involving Alice, Bob, and Eve, modeled by the axioms of quantum theory as a tripartite state expressed by a positive operator on a Hilbert space with trace equal to $1$,
\begin{align}
    \rho_{Q_A Q_B Q_E} \in \mathcal{S}(\mathcal{H}_{Q_A} \otimes \mathcal{H}_{Q_B} \otimes \mathcal{H}_{QE}).
\end{align}
Importantly, the argument for the device independence specifies nothing concrete about the Hilbert space $\mathcal{H}_{Q_A} \otimes \mathcal{H}_{Q_B} \otimes \mathcal{H}_{QE}$. 
In the following we follow the example of generating a raw-key in a DIQKD protocol, even though our methods can be applied for a larger class of problems in \autoref{sec:applications_numerical_examples}. Afterwards, we briefly discuss some simple generalizations. 

In \autoref{sec:applications_numerical_examples} we elaborate on this assumption for certain QKD scenarios. In a DIQKD protocol we usually have some measurements $\mathcal{X}$ on $\mathcal{S}(\mathcal{H}_{Q_A})$ with finite outcome set $A$ and similarly $\mathcal{Y}$ on $\mathcal{S}(\mathcal{H}_{Q_B})$ with finite outcome set $B$ respectively. Measurements, see Fig.\ref{fig:setting}, are generally understood as sets of POVMs $\{M_{a\vert x}\}_{a \in A}$ for each $x \in \mathcal{X}$ and similarly $\{M_{b\vert y}\}_{b \in B}$ for each $y \in \mathcal{Y}$. From a more abstract perspective a measurement could also be understood as  channel that maps from a commutative algebra into the observable algebra of a quantum system. In the most simplest form of a protocol, the raw-key is now constructed by one specific measurement $\tilde{x} \in \mathcal{X}$. This can be generalized later on \cite{Schwonnek2021}, but we skip the technical details of this for now. The asymptotically extractable randomness of the fixed measurement $\tilde{x}\in \mathcal{X}$ can be described as\footnote{For simplicity we abbreviate $\rho_{AQ_E\vert\tilde{x}} \equiv \rho_{AQ_E\vert X = \tilde{x}}$}
\begin{align}\label{eq:rewriting_conditional_to_von_neumann}
    H(A\vert X = \tilde{x}, Q_E)_{\rho_{A Q_E\vert \tilde{x}}} = - D(\rho_{A Q_E\vert \tilde{x}} \Vert \mathds{1}_A \otimes \rho_{Q_E}).
\end{align}
Hence, we have to lower bound $-D(\rho_{A Q_E\vert \tilde{x}} \Vert \mathds{1}_A \otimes \rho_{Q_E})$. Specifically, device independent lower bounds satisfy the following prototypical type of optimization problem
\begin{equation}\label{eq:prototypical_optimization}
\begin{aligned}
    \inf &-D(\rho_{A Q_E\vert \tilde{x}} \Vert \mathds{1}_A \otimes \rho_{Q_E})& \\
    &\sum_{abxyi} c_{abxyi}\tr[M_{a\vert x}\otimes N_{b\vert y}\rho_{Q_A Q_B}] \geq q_i, \quad 1\leq i \leq m& \\
    &\sum_{a}M_{a\vert x} = \mathds{1}_{A}, \quad x \in \mathcal{X}&\\
    &\sum_{b} N_{b \vert y} = \mathds{1}_B, \quad y \in \mathcal{Y}&\\
    &M_{a\vert x} \geq 0 \quad a\in A, \ x \in \mathcal{X}& \\
    &N_{b\vert y} \geq 0 \quad b \in B, \ y\in \mathcal{Y}& \\
    &\rho_{Q_A Q_B Q_E} \in \mathcal{S}(\mathcal{H}_{Q_A} \otimes \mathcal{H}_{Q_B} \otimes \mathcal{H}_{Q_E})&
\end{aligned}
\end{equation}
whereby we do not aim to specify the concrete underlying Hilbert space. Moreover, the second line can be understood as information about the joint statistics of Alice and Bob, which is nothing more than \eqref{eq:non_local_distribution}. A simple but sufficient and common example would be that the $c_{abxyi} \in \mathbb{R}$ coefficients form a Bell inequality (see e.g. \cite{Brunner2014}). In the following we need to estimate the program \eqref{eq:prototypical_optimization} with an actually quantifiable program. For this purpose, we use the integral representation \eqref{eq:integral_jencova} and derive an upper bound for the relative entropy \eqref{eq:def_relative_entropy} with an operator valued optimization problem. 
\begin{theorem}[Upper bounds for relative entropy]\label{thm:main_theorem}
    Consider two positive operators $\rho,\sigma \in \mathcal{P}(\mathcal{H})$ such that 
    \begin{align}
        \mu \sigma \leq \rho \leq \lambda \sigma  
    \end{align}
    for arbitrary $\mu,\lambda \in \mathbb{R}_{\geq 0}$. Then we can upper bound 
    \begin{equation}\label{eq:estimate_main_theorem}
    \int_{0}^\lambda \frac{ds}{s} \tr^+[\sigma s- \rho] \leq \sup_{0\leq P_0,\ldots, P_r \leq \mathds{1}} \sum_{k=0}^r \tr[P_k(\alpha_k\rho + \beta_k \sigma)]
    \end{equation}
    for sets of real numbers $\alpha_0,\ldots,\alpha_r,\beta_0,\ldots,\beta_r \in \mathbb{R}$, which can be easily precomputed. 

    In particular the estimate \eqref{eq:estimate_main_theorem} yields   
\begin{equation}\label{eq:upper_estimate}
\begin{aligned}
    D(\rho \Vert \sigma) &\leq  \frac{1}{\ln 2}(\tr[\rho - \sigma] \\
    &+ \sup_{0\leq P_0,\ldots, P_r \leq \mathds{1}} \sum_{k=0}^r \tr[P_k(\alpha_k\rho + \beta_k \sigma)] \\
    & + \tr[\rho] \ln\lambda - (\lambda-1)\tr[\sigma]).
\end{aligned}
\end{equation}
\end{theorem}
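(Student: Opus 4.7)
The plan is to bound the integral in \eqref{eq:estimate_main_theorem} by a finite weighted sum of $\tr^+[\sigma s_k - \rho]$ terms via a chord estimate that exploits convexity of the integrand in $s$, and then to merge the SDP representation \eqref{eq:variational_tr_plus} evaluated at each node into a single joint supremum. Two preliminary observations drive the argument: (i) $s \mapsto \tr^+[\sigma s - \rho] = \sup_{0 \leq P \leq \mathds{1}}\tr[P(\sigma s - \rho)]$ is convex in $s$, being a supremum of affine functions; and (ii) the hypothesis $\mu\sigma \leq \rho$ forces $\sigma s - \rho \leq 0$ on $[0,\mu]$, so the integrand vanishes there, neutralising the $1/s$ singularity at zero and restricting the effective integration range to $[\mu,\lambda]$.

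With nodes $\mu = s_0 < s_1 < \cdots < s_r = \lambda$ chosen, convexity gives the chord upper bound
\begin{align*}
\tr^+[\sigma s - \rho] \leq \tfrac{s_{k+1}-s}{s_{k+1}-s_k}\tr^+[\sigma s_k - \rho] + \tfrac{s-s_k}{s_{k+1}-s_k}\tr^+[\sigma s_{k+1}-\rho]
\end{align*}
on each $[s_k,s_{k+1}]$. Dividing by $s$ and integrating reduces to elementary scalar integrals of the form $\int_{s_k}^{s_{k+1}} (s_{k\pm 1}-s)/[s(s_{k+1}-s_k)]\,ds$, which evaluate to combinations of $\ln(s_{k+1}/s_k)$ and linear node differences. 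Aggregating the contributions of the two subintervals meeting at each node yields
\begin{align*}
\int_0^\lambda \frac{ds}{s}\,\tr^+[\sigma s - \rho] \leq \sum_{k=0}^r w_k\,\tr^+[\sigma s_k - \rho]
\end{align*}
with precomputable weights $w_k \geq 0$. Invoking \eqref{eq:variational_tr_plus} on each summand and observing that the constraints on distinct $P_k$ are independent allows me to combine the pointwise suprema into the joint one, producing \eqref{eq:estimate_main_theorem} with $\alpha_k = -w_k$ and $\beta_k = w_k s_k$. Substituting this bound directly into \eqref{eq:integral_jencova} then yields \eqref{eq:upper_estimate}.

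The main obstacle is bookkeeping rather than conceptual: one must verify that the weights $w_k$ are genuinely nonnegative—they are, via the elementary inequalities $\ln r \geq 1 - 1/r$ and $\ln r \leq r-1$ for $r \geq 1$ applied to adjacent node ratios—and keep track of the logarithmic node contributions cleanly enough to read off $\alpha_k,\beta_k$. No deeper functional-analytic input is required, since \autoref{lem:jencova_trace_class} already secures the validity of the integral representation \eqref{eq:integral_jencova}; the argument rests only on convexity of a supremum of affine functions, elementary Riemann integration, and the standard SDP form of $\tr^+$.
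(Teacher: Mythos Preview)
Your proof is correct and follows essentially the same convexity/chord-bound strategy as the paper's Appendix~\ref{appendix:proof_main_theorem}, which cites \cite{kossmann2024_optimization} for the piecewise-linear upper estimate that you derive directly from the supremum-of-affine-functions observation. Your remark that the integrand vanishes identically on $[0,\mu]$ is slightly cleaner than the paper's separate convexity bound for that interval (which produces a $k=0$ term that is zero under the hypothesis anyway), and your log-inequality verification of $w_k\geq 0$ matches exactly what is needed to pull each $\tr^+$ through to a supremum over $P_k$.
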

\begin{proof}
    See \autoref{appendix:proof_main_theorem}.
\end{proof}
The \autoref{thm:main_theorem} provides a sequence of upper bounds that converge to the relative entropy. Specifically, this sequence of optimization problems can be understood as an application of \cite{kossmann2024_optimization}. We emphasize that a finite $\lambda$ is, at least in finite dimensions, equivalent to the statement that the relative entropy is finite. This is a crucial component in these optimization problems, allowing for effective numerical analysis. In the case of conditional von-Neumann entropy, a finite $\lambda$ always exists, as for any bipartite state $\rho_{AB}$ we have the fundamental operator inequality (see, e.g., \cite[Appendix A]{tomamichel2013frameworknonasymptoticquantuminformation}):
\begin{align}
    \rho_{AB} \leq d_A \mathds{1}_A \otimes \rho_B.
\end{align}

Now, let us examine the specific optimization problem in \eqref{eq:prototypical_optimization}. It can be lower-bounded using \autoref{thm:main_theorem} as an optimization over operators $P_0, \ldots, P_r$. In particular, we observe that the optimal operators $P_k$, $0 \leq k \leq r$, are projections, as they correspond to projections onto the positive part in the Jordan decomposition of the Hermitian operator $\alpha_k \rho + \beta_k \sigma$. Furthermore, by substituting the specific operators $\rho = \mathds{1}_A \otimes \rho_{Q_E}$ and $\sigma = \rho_{A Q_E \vert \tilde{x}}$, we find that $\lambda = 1$ and $\mu$ can be set to $0$ without loss of generality (see \autoref{appendix:proof_main_theorem} for details). Thus, we can rewrite \eqref{eq:prototypical_optimization} as
\begin{equation}\label{eq:optimization_problem_first_rewrite}
\begin{aligned}
    \inf &\inf_{0\leq P_0,\ldots, P_r \leq \mathds{1}} \frac{1}{\ln 2}(\tr[\mathds{1}_A \otimes \rho_{Q_E} - \rho_{AQ_E\vert \tilde{x}}]& \\
    + &\sum_{k=0}^r \tr[P_k(-\alpha_k\rho_{A Q_E \vert \tilde{x}} -\beta_k \mathds{1}_A \otimes \rho_{Q_E})]& \\
   &\sum_{abxyi} c_{abxyi} \tr[M_{a\vert x}\otimes N_{b\vert y}\rho_{Q_A Q_B}] \geq q_i, \quad 1\leq i \leq m& \\
    &\sum_{a}M_{a\vert x} = \mathds{1}_{A}, \quad x \in \mathcal{X}, \quad \sum_{b} N_{b \vert y} = \mathds{1}_B, \quad y \in \mathcal{Y}&\\
    &M_{a\vert x} \geq 0 \quad a\in A, \ x \in \mathcal{X}, \quad N_{b\vert y} \geq 0 \quad b \in B, \ y\in \mathcal{Y}& \\
    &\rho_{Q_A Q_B Q_E} \in \mathcal{S}(\mathcal{H}_{Q_A} \otimes \mathcal{H}_{Q_B} \otimes \mathcal{H}_{Q_E})& \\
    &P_k^2 = P_k \in \mathcal{P}(\mathcal{H}_A \otimes \mathcal{H}_{Q_E}).&
\end{aligned}
\end{equation}
An application of Naimark's dilation theorem additionally allows us to rewrite the state $\rho_{Q_A Q_B Q_E} \in \mathcal{S}(\mathcal{H}_{Q_A} \otimes \mathcal{H}_{Q_B} \otimes \mathcal{H}_{Q_E})$ without loss of generality as a pure state $\psi_{Q_A Q_B Q_E} \in \mathcal{H}_{Q_A}\otimes \mathcal{H}_{Q_B} \otimes \mathcal{H}_{Q_E}$, with all measurements also being projective. By carefully examining the classical-quantum structure of $\rho_{A Q_E \vert \tilde{x}}$, the post-measurement state in our specific key basis $\tilde{x} \in \mathcal{X}$ can then be used to reformulate \eqref{eq:optimization_problem_first_rewrite} as a non-commutative polynomial optimization in \autoref{thm:npo_formulation}.

\begin{theorem}[NPO Formulation]\label{thm:npo_formulation}
Calculating lower bounds for the conditional von-Neumann entropy \eqref{eq:conditional_von_neumann} in the device independent setting with knowledge about statistics between Alice and Bobs systems can be reformulated into the following non-commutative polynomial optimization problem
\begin{equation}\label{eq:optimization_problem_second_rewrite}
\begin{aligned}
H(A\vert &X = \tilde{x},Q_E) \geq& \\ 
    \inf \ &\frac{1}{\ln 2}(\vert A\vert - 1 +& \\
    &\sum_{k=0}^r \sum_{a \in A} \bra{\psi_{Q_A Q_B Q_E}} -\alpha_k M_{a\vert x}P_k^{(a)}& \\
    &- \beta_k P_k^{(a)}\ket{\psi_{Q_A Q_B Q_E}})& \\
   &\sum_{abxyi} c_{abxyi}\bra{\psi_{Q_A Q_B Q_E}}M_{a\vert x} N_{b\vert y}\ket{\psi_{Q_A Q_B Q_E}} \geq q_i&\\
   &\quad 1\leq i \leq m& \\
    &\sum_{a}M_{a\vert x} = \mathds{1}_{A}, \quad x \in \mathcal{X}, \quad \sum_{b} N_{b \vert y} = \mathds{1}_B, \quad y \in \mathcal{Y}&\\
    &M_{a\vert x} \geq 0 \quad a\in A, \ x \in \mathcal{X}, \quad N_{b\vert y} \geq 0 \quad b \in B, \ y\in \mathcal{Y}& \\
    &[M_{a\vert x},N_{b\vert y}] = [M_{a\vert x},P_k^{(a)}] = [N_{b\vert y},P_k^{(a)}] = 0& \\
    &\quad b\in B,\ a \in A, \ x \in \mathcal{X}, \ y\in \mathcal{Y}, \ 0 \leq k \leq r&\\
    &(P_k^{(a)})^2 = P_k^{(a)}, \quad (P_k^{(a)})^\star = P_k^{(a)}&\\
    &\quad 1\leq k \leq r, \ a \in A.&
\end{aligned}
\end{equation}
\end{theorem}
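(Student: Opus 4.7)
The plan is to start from the operator-valued problem \eqref{eq:optimization_problem_first_rewrite} and rewrite every trace as a single pure-state expectation value of a noncommutative polynomial in projections, which is exactly the input format of the NPA hierarchy. First I would apply Naimark's theorem in two stages: a Stinespring purification enlarges $\mathcal{H}_{Q_E}$ and turns $\rho_{Q_A Q_B Q_E}$ into a pure $\ket{\psi_{Q_A Q_B Q_E}}$ without affecting the value of \eqref{eq:optimization_problem_first_rewrite}, and Naimark's dilation replaces the POVMs $\{M_{a\vert x}\}$ and $\{N_{b\vert y}\}$ by projective measurements on suitably enlarged spaces $\mathcal{H}_{Q_A}$ and $\mathcal{H}_{Q_B}$. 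The tensor factorization $\mathcal{H}_{Q_A}\otimes\mathcal{H}_{Q_B}\otimes\mathcal{H}_{Q_E}$ then gives $[M_{a\vert x},N_{b\vert y}]=0$ automatically, and is also the origin of the ``projective'' status of all support variables advertised in the abstract.

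Next I would exploit the classical-quantum block structure of the post-measurement state. Writing $\rho_{AQ_E\vert\tilde{x}}=\sum_a\ketbra{a}{a}_A\otimes\rho_{Q_E}^{(a)}$ with $\rho_{Q_E}^{(a)}=\tr_{Q_AQ_B}[(M_{a\vert\tilde{x}}\otimes\mathds{1}_{Q_BQ_E})\ketbra{\psi}{\psi}]$, both $\rho_{AQ_E\vert\tilde{x}}$ and $\mathds{1}_A\otimes\rho_{Q_E}$ are block-diagonal in the computational basis of $A$. Consequently $-\alpha_k\rho_{AQ_E\vert\tilde{x}}-\beta_k\mathds{1}_A\otimes\rho_{Q_E}$ and its Jordan decomposition are block-diagonal, so the supremum in \eqref{eq:variational_tr_plus} is attained at a block-diagonal spectral projection. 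Without loss of generality $P_k=\sum_a\ketbra{a}{a}_A\otimes P_k^{(a)}$ with each $P_k^{(a)}$ a projection on $\mathcal{H}_{Q_E}$. Substituting this form yields $\tr[\mathds{1}_A\otimes\rho_{Q_E}-\rho_{AQ_E\vert\tilde{x}}]=\vert A\vert-1$ and
\begin{align*}
&\tr\bigl[P_k(-\alpha_k\rho_{AQ_E\vert\tilde{x}}-\beta_k\mathds{1}_A\otimes\rho_{Q_E})\bigr]\\
&\qquad=\sum_a\bra{\psi_{Q_AQ_BQ_E}}\bigl(-\alpha_k M_{a\vert\tilde{x}}P_k^{(a)}-\beta_k P_k^{(a)}\bigr)\ket{\psi_{Q_AQ_BQ_E}},
\end{align*}
once each $P_k^{(a)}$ is regarded as an operator on the full Hilbert space by tensoring with the identity on $\mathcal{H}_{Q_A}\otimes\mathcal{H}_{Q_B}$. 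The commutators $[M_{a\vert x},P_k^{(a)}]$ and $[N_{b\vert y},P_k^{(a)}]$ then vanish by the same tensor factorization.

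The step I expect to be the main obstacle is exactly this simultaneous block-diagonalization and projectivity of the optimizer $P_k$: one must justify that restricting the supremum in \eqref{eq:estimate_main_theorem} to operators of the form $\sum_a\ketbra{a}{a}_A\otimes P_k^{(a)}$ with each $P_k^{(a)}$ an orthogonal projection does not decrease the value. This rests on the fact that the Jordan decomposition of a block-diagonal self-adjoint operator is itself block-diagonal, so the positive-part projection inherits both the block structure and its projectivity in one stroke; the implicit appeal to \autoref{thm:main_theorem} also fixes $\lambda=1$ and $\mu=0$ via the operator inequality $\rho_{AQ_E\vert\tilde{x}}\leq\mathds{1}_A\otimes\rho_{Q_E}$, eliminating the boundary terms in \eqref{eq:upper_estimate}. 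Once these structural reductions are settled, the remainder of \autoref{thm:npo_formulation} is a direct identification of the operators on $\mathcal{H}_{Q_A}\otimes\mathcal{H}_{Q_B}\otimes\mathcal{H}_{Q_E}$ with the noncommutative polynomial variables of the NPO program.
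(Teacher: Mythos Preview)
Your proposal is correct and follows essentially the same route as the paper's proof in \autoref{appendix:rewriting}: both arguments combine the classical--quantum block structure of $\rho_{AQ_E\vert\tilde x}$ and $\mathds{1}_A\otimes\rho_{Q_E}$ to reduce the optimal $P_k$ to block-diagonal projections $\sum_a\ketbra{a}{a}\otimes P_k^{(a)}$, use $\rho_{AQ_E\vert\tilde x}\le \mathds{1}_A\otimes\rho_{Q_E}$ to set $\lambda=1$ and kill the boundary terms, and invoke Naimark to purify the state and make the POVMs projective. The only cosmetic difference is that you apply Naimark at the outset while the paper postpones it to the end; your justification of the block-diagonal projectivity via the Jordan decomposition is in fact slightly more explicit than the paper's.
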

\begin{proof}
    See \autoref{appendix:rewriting}
\end{proof}

The polynomial optimization problem \eqref{eq:optimization_problem_second_rewrite} can be interpreted as an optimization over the algebraic constraints within it. In a finite-dimensional Hilbert space setting, these constraints can be viewed as generating rules of a $\star$-subalgebra of $\mathcal{B}(\mathcal{H})$, where $\mathcal{H}$ is a Hilbert space (in infinite dimensions, caution is required to handle the topological closure of the algebra, though this remains feasible). Assuming that these relations form an algebra $\mathcal{A}$, we employ the Navascues-Pironio-Acin (NPA) hierarchy \cite{Navascus2008}, to compute outer approximations to the optimization problem \eqref{eq:optimization_problem_second_rewrite}, thus lower bounds.

The approach involves relaxing the positivity requirement for linear, continuous functionals on $\mathcal{A}$. Rather than requiring positivity across the positive cone $\mathcal{A}_+$, we restrict it to a sum-of-squares cone $\Sigma_2^{(n)} \subset \mathcal{A}_+$ at level $n$. Optimization over this relaxed SDP can then be viewed as optimizing the dual cone $(\Sigma_2^{(n)})^\star$, which is representable in SDP form and provides a convergent hierarchy of outer approximations to \eqref{eq:optimization_problem_second_rewrite}. For further details, we refer to \cite{Navascus2008,koßmann2023hierarchies,Ligthart_2023} and \autoref{sec:npa_hierarchy}. In our applications, we use the software package from \cite{Wittek_2015} and code is available \href{https://github.com/gereonkn/DIBounds}{here}.

\begin{figure*}
    \centering
    \subfloat{
            \begin{tikzpicture}
\begin{axis}[width=6cm,height=6cm,scatter/classes={%
    a={mark=o}},
    xlabel = CHSH value,
    ylabel = conditional entropy,
    legend style={at={(0.05,0.95)}, anchor=north west}]
\addplot[only marks,%
    smooth,color=PineGreen]%
table{resultsdi/entropy};
\addlegendentry{lower bound}
\addplot[only marks,%
    scatter src=explicit symbolic,color=MidnightBlue]%
table{resultsdi/exact_entropy};
\addlegendentry{analytic value}

\end{axis}
\end{tikzpicture}
    }      
    \subfloat{
\begin{tikzpicture}
\begin{axis}[width=6cm,height=6cm,scatter/classes={%
    a={mark=o}},
    xlabel = CHSH value,
    ylabel = conditional entropy,
    legend style={at={(0.05,0.95)}, anchor=north west}]
\addplot[only marks,%
    only marks,
  scatter src=explicit symbolic,
  mark=*,
  mark size=1pt,color=PineGreen]%
table{resultsdi/numrel_values.txt};
\addlegendentry{lower bound}
\addplot[
  only marks,
  scatter src=explicit symbolic,
  mark=*,
  mark size=1pt,
  color=MidnightBlue
]
table{resultsdi/numrel_values_colbeck.txt};
\addlegendentry{upper bound \cite{Bhavsar_2023}}

\end{axis}
\end{tikzpicture}
    }
      \subfloat{
            \begin{tikzpicture}
            
\begin{axis}[width=6cm,height=6cm,scatter/classes={%
    a={mark=o}},
    xlabel = white noise,
    ylabel = conditional entropy,
    legend style={at={(0.05,0.95)}, anchor=north west}]
\addplot[only marks,%
    only marks,
  scatter src=explicit symbolic,
  mark=*,
  mark size=1pt,color=PineGreen]%
table{resultsdi/entropy_2322_global};
\addlegendentry{lower bound 2322}

\end{axis}
\end{tikzpicture}

    }
\caption{\justifying
The plots illustrate examples of lower bounds for conditional entropies across different scenarios. In the leftmost plot, we present a proof-of-principle example for lower bounds on the conditional entropy $ H(A \vert X = 0, Q_E)_{\rho} $, depending on various values of the CHSH violation in a 2222-scenario (where each of two parties can choose between two measurements with two outcomes). This example is particularly significant because it allows for a verification of the general correctness of the method; as shown in \cite{Acn2007}, there are analytical values available for this scenario, enabling comparisons with computed results to validate accuracy. The middle plot displays the lower bounds on global randomness, specifically the conditional entropy $ H(AB \vert X = 0, Y = 0, Q_E) $, again in a 2222-scenario and upper bounds developed by \cite{Bhavsar_2023}. Here, the dependency is on the CHSH inequality, and this plot serves to examine the impact of the CHSH violation on the joint conditional entropy of outcomes $ A $ and $ B $ given measurement settings $ X = 0 $ and $ Y = 0 $. Finally, the rightmost plot demonstrates the quantity $ H(AB \vert X = 0, Y = 0, Q_E) $ in a 2322-scenario, where each party has two possible outcomes and Bob has three measurement choices. In this case, we condition on the full distribution of a Werner state depending on dephasing noise. All plots can be generated within seconds to a few minutes of computation on a machine equipped with 12 threads and 128GB of RAM.}
\label{fig:results}
\end{figure*}

\section{Applications and numerical examples}\label{sec:applications_numerical_examples}

In this section, we demonstrate  the methods and tools developed in \autoref{sec:bounding_von_neumann}. To structure this discussion, we divide \autoref{sec:applications_numerical_examples} into two parts:

In the first part, we examine a CHSH game scenario with two possible inputs and two possible outputs. Our objective here is to compute $ H(A \vert X = 0, Q_E) $, which is a quantity of particular interest in two main areas: one-sided device-independent randomness extraction (DIRE) and device-independent quantum key distribution (DIQKD). In one-sided DIRE, $ H(A \vert X = 0, Q_E) $ provides a measure of the randomness that can be reliably extracted from Alice’s outcomes, assuming a fixed input $ X = 0 $ and some quantum side information $ Q_E $. Similarly, in DIQKD, this conditional entropy helps in determining the asymptotic amount of secure randomness extractable from the quantum correlations between Alice and an adversary.

In the second part, we extend this investigation to the two-sided DIRE task, again within a CHSH game framework. Here, the focus shifts to the quantity $ H(AB \vert X = 0, Y = 0, Q_E) $, where the conditional entropy now includes both Alice’s and Bob’s outcomes. This joint entropy becomes essential for evaluating the global randomness accessible when both parties contribute to the measurement outcomes in the context of device-independent randomness extraction.

For clarity, we abbreviate scenarios based on the number of inputs and outputs per party. Specifically, a scenario with $ n_a $ inputs for Alice, $ n_b $ inputs for Bob, $ o_a $ possible outcomes for each input of Alice, and $ o_b $ possible outcomes for each input of Bob is denoted by $ n_a n_b o_a o_b $. For instance, the CHSH scenario with two inputs and two outputs per party is a 2222-scenario. Furthermore we observe that all quantities under consideration in this work are usable for spot-checking protocols. An interesting direction for future research is to extend this to averaging the entropy terms without spot-checking.

\subsection{one-sided randomness}\label{subsec:one_sided_randomness}

In device-independent quantum information tasks, one-sided (device-independent) randomness is quantified as follows:
\begin{align}\label{eq:one_sided_randomness}
    H(A\vert X  = \tilde{x}, Q_E)_{\rho_{AQ_E\vert \tilde{x}}},
\end{align}
where this quantity captures the uncertainty in the outcome $ A $ when measurement $ X $ takes a fixed value $ \tilde{x} $, conditioned on quantum side information $ Q_E $ held by an eavesdropper. This concept is fundamental in the context of device-independent quantum key distribution (DIQKD) and one-sided randomness extraction \cite{Bhavsar_2023,Devetak_2005}.

In quantum key distribution (QKD), the asymptotic key rate $ r_{\rightarrow} $ can be bounded using the Devetak-Winter formula, which states:
\begin{align}\label{eq:devetak_winter}
    r_{\rightarrow} \geq H(A\vert E)_{\rho_{A Q_E}} - H(A\vert B)_{\rho_{AB}},
\end{align}
where $ H(A\vert E)_{\rho} $ denotes the conditional entropy of $ A $ given the eavesdropper's knowledge $ E $, and $ H(A\vert B)_{\rho} $ denotes the conditional entropy of $ A $ given the legitimate party $ B $. In DIQKD experiments, estimating $ H(A\vert E)_{\rho} $ relies on observed statistics, such as violations of Bell inequalities or even the full probability distribution gathered from the experiment.

This quantity remains significant even in finite-sample-size regimes due to recent advancements in entropy accumulation theorems (EAT) \cite{Dupuis2020,Metger2022}. These theorems provide a framework for quantifying entropy in finite scenarios by demonstrating that finite-size quantities, like the smoothed min-entropy, can be bounded from below by a sum of asymptotic quantities. Thus, an essential part of a DIQKD security proof reduces to reliably lower bounding the asymptotic quantities in  \eqref{eq:devetak_winter}.

Beyond DIQKD,  \eqref{eq:one_sided_randomness} is also important for randomness extraction, a process aimed at generating bits that are random and independent of any other system from a quantum experiment \cite{Liu_2021,Liu_2022,Liu_2023}.

As a concrete example of one-sided randomness extraction, consider the 2222-scenario under the constraint that the underlying, potentially unknown, state $ \rho $ violates the CHSH inequality (Clauser-Horne-Shimony-Holt) with a value of $ \omega \in [2,2\sqrt{2}] $:
\begin{align}\label{eq:CHSH}
    \langle A_0 B_0 + A_0 B_1 + A_1 B_0 - A_1 B_1\rangle_{\rho} \geq \omega.
\end{align}
Notably, in 2222-scenarios, the local polytope is uniquely described by the CHSH inequality alone, which is sufficient to certify non-locality in such settings \cite{Brunner2014}. Furthermore, in this specific 2222-scenario, with only the CHSH inequality as a constraint, an analytic solution for  \eqref{eq:one_sided_randomness} is achievable in a device-independent context, as demonstrated by \cite{Acn2007_diqkd}. Therefore, we use this as a benchmark in \autoref{fig:results}.

For the experiment shown in \autoref{fig:results} in a 2222-scenario with a CHSH inequality as the constraint, we employ a second-level NPA hierarchy with additional monomials of the form $ M_{a\vert x} N_{b\vert y} P^{(a)} $. We also apply the speedups for numerical evaluations discussed in \autoref{appendix:numerics}. This modification yields a valid lower bound that serves as a good approximation. In conclusion, the optimization approach is highly efficient, allowing us to partition the interval into many subintervals while keeping the runtime manageable, even on standard personal computers, with computation times remaining within a few seconds.

\subsection{two-sided randomness}

In contrast to the one-sided randomness discussed in the previous section, it is also possible to extract randomness from both parties simultaneously. This approach can significantly enhance the power of quantum random number generators. In such generators, where the experiment is performed within a single laboratory, Alice’s and Bob’s laboratories are effectively combined, allowing randomness to be extracted from both devices in parallel. Consequently, this increases the overall efficiency and robustness of the randomness generation process.

In this two-sided extraction scenario, the key quantity of interest is expressed as:
\begin{align}
    H(AB \vert X = \tilde{x}, Y = \tilde{y}, Q_E).
\end{align}
This expression represents the conditional entropy of the joint outcomes $ A $ and $ B $, given fixed measurement settings $ X = \tilde{x} $ and $ Y = \tilde{y} $, as well as quantum side information $ Q_E $. This joint entropy quantifies the total randomness that can be extracted from the outcomes of both parties under given conditions, making it particularly relevant for applications such as random number generator.

As demonstrated in \autoref{appendix:global_distribution}, there is a formula for this two-sided randomness extraction problem that closely resembles the formula presented in \autoref{thm:main_theorem}. We formalize this relationship by stating it as a corollary.
 
\begin{corollary}\label{cor:two_sided_randomness}
Lower bounding the conditional von-Neumann entropy

   $ H(AB\vert X = \tilde{x}, Y = \tilde{y},Q_E)_{\rho_{ABQ_E\vert \tilde{x},\tilde{y}}}$

can be done by solving the following non-commutative polynomial optimization problem
\begin{equation}\label{eq:optimization_two_sided}
\begin{aligned}
    \inf \ \frac{1}{\ln 2}&(\vert A\vert \cdot \vert B \vert - 1 +& \\
    &\sum_{k=0}^r \sum_{a \in A,b \in B} \bra{\psi_{Q_A Q_B Q_E}} -\alpha_k M_{a\vert \tilde{x}}N_{b\vert \tilde{y}}P_k^{(ab)}& \\
    &- \beta_k P_k^{(ab)}\ket{\psi_{Q_A Q_B Q_E}})& \\
   &\sum_{abxyi} c_{abxyi}\bra{\psi_{Q_A Q_B Q_E}}M_{a\vert x} N_{b\vert y}\ket{\psi_{Q_A Q_B Q_E}} \geq q_i&\\
   &\quad 1\leq i \leq m& \\
    &\sum_{a}M_{a\vert x} = \mathds{1}_{A}, \quad x \in \mathcal{X}&\\
    &\sum_{b} N_{b \vert y} = \mathds{1}_B, \quad y \in \mathcal{Y}&\\
    &M_{a\vert x} \geq 0 \quad a\in A, \ x \in \mathcal{X}& \\
    &N_{b\vert y} \geq 0 \quad b \in B, \ y\in \mathcal{Y}& \\
    &[M_{a\vert x},N_{b\vert y}] = [M_{a\vert x},P_k^{(a)}] = [N_{b\vert y},P_k^{(a)}] = 0& \\
    &\quad b\in B,\ a \in A, \ x \in \mathcal{X}, \ y\in \mathcal{Y}, \ 0 \leq k \leq r&\\
    &(P_k^{(a)})^2 = P_k^{(a)}, \quad (P_k^{(a)})^\star = P_k^{(a)}&\\
    &\quad 1\leq k \leq r, \ a \in A.&
\end{aligned}
\end{equation}
\end{corollary}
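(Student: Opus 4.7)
The plan is to follow the argument for Theorem \ref{thm:npo_formulation} (carried out in \autoref{appendix:rewriting}) essentially verbatim, with Alice's classical outcome register $A$ replaced throughout by the joint classical register $AB$. The starting point is the analogue of \eqref{eq:rewriting_conditional_to_von_neumann},
\begin{equation*}
H(AB\vert X = \tilde{x}, Y = \tilde{y}, Q_E)_{\rho_{ABQ_E\vert \tilde{x},\tilde{y}}} = -D\!\left(\rho_{ABQ_E\vert \tilde{x},\tilde{y}} \,\Vert\, \mathds{1}_{AB}\otimes \rho_{Q_E}\right),
\end{equation*}
which I would feed into \autoref{thm:main_theorem} with $\rho = \rho_{ABQ_E\vert \tilde{x},\tilde{y}}$ and $\sigma = \mathds{1}_{AB}\otimes\rho_{Q_E}$. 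Because $\rho$ is classical–quantum on $AB$, the operator inequality $\rho \leq \sigma$ holds, so one may choose $\lambda = 1$ and $\mu = 0$ without loss of generality, which kills both boundary contributions $\tr[\rho]\ln\lambda$ and $(\lambda-1)\tr[\sigma]$, leaving $\tr[\rho-\sigma] = 1 - |A|\,|B|$.

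Next I would apply Naimark's dilation, as in \autoref{thm:npo_formulation}, to replace the tripartite state by a pure vector $\ket{\psi_{Q_A Q_B Q_E}}$ and the POVMs by projective, mutually commuting measurements. The post-measurement state then decomposes as
\begin{equation*}
\rho_{ABQ_E\vert \tilde{x},\tilde{y}} = \sum_{a,b} \ketbra{ab}{ab}_{AB}\otimes \rho_{Q_E}^{(ab)},\quad \rho_{Q_E}^{(ab)} = \tr_{Q_A Q_B}\!\left[(M_{a\vert\tilde{x}}\otimes N_{b\vert\tilde{y}}\otimes \mathds{1})\ketbra{\psi}{\psi}\right],
\end{equation*}
and $\mathds{1}_{AB}\otimes \rho_{Q_E}$ is block diagonal in the same basis. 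Hence the Hermitian operators $\alpha_k\rho + \beta_k\sigma$ appearing inside the trace in \autoref{thm:main_theorem} are block diagonal on $AB$, and their positive parts --- which attain the supremum in \eqref{eq:variational_tr_plus} --- are block diagonal as well. Writing these optimizers as $P_k = \sum_{a,b} \ketbra{ab}{ab}_{AB}\otimes P_k^{(ab)}$ and tracing out $AB$ converts each summand $\tr[P_k(\alpha_k\rho + \beta_k\sigma)]$ into $\sum_{a,b}\tr[P_k^{(ab)}(\alpha_k\rho_{Q_E}^{(ab)} + \beta_k\rho_{Q_E})]$. Substituting the Naimark expression for $\rho_{Q_E}^{(ab)}$ and using commutativity of $M_{a\vert\tilde{x}}$, $N_{b\vert\tilde{y}}$ and $P_k^{(ab)}$ rewrites every trace as a matrix element on $\ket{\psi_{Q_A Q_B Q_E}}$, exactly reproducing the objective in \eqref{eq:optimization_two_sided}.

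The constraints from the observed joint statistics, the POVM normalizations, and the positivity of $M_{a\vert x}$, $N_{b\vert y}$ transfer without change from \eqref{eq:optimization_problem_second_rewrite}, while the projector conditions $(P_k^{(ab)})^2 = P_k^{(ab)} = (P_k^{(ab)})^\star$ come, as before, from the Jordan decomposition that realises $\tr^+$. The main obstacle — and the only place where any real work beyond bookkeeping is needed — is to justify the block-diagonal reduction of the optimal $P_k$: this is the two-party analogue of the classical-quantum argument used in the one-sided proof, and the key observation is that restricting to block-diagonal projectors preserves feasibility ($0 \leq P_k \leq \mathds{1}$) and does not decrease the objective, because the off-block entries of $P_k$ are multiplied by zero in $\tr[P_k(\alpha_k\rho + \beta_k\sigma)]$. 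With this verified, the rest of the proof reduces to the same chain of rewrites as in \autoref{appendix:rewriting} applied on the larger outcome alphabet $A\times B$.
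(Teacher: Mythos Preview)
Your proposal is correct and follows essentially the same route as the paper's own proof in \autoref{appendix:global_distribution}: apply \autoref{thm:main_theorem} to $\rho=\rho_{ABQ_E\vert\tilde{x},\tilde{y}}$ and $\sigma=\mathds{1}_{AB}\otimes\rho_{Q_E}$, use the classical--quantum structure to set $\lambda=1$, $\mu=0$ and to block-diagonalise $P_k=\sum_{a,b}\ketbra{ab}{ab}\otimes P_k^{(ab)}$, and then substitute the explicit form of the post-measurement state to obtain the stated objective. Your explicit justification that off-block entries of $P_k$ contribute nothing to $\tr[P_k(\alpha_k\rho+\beta_k\sigma)]$ is in fact slightly more detailed than the paper, which simply asserts the block-diagonal form.
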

\begin{proof}
    \autoref{appendix:global_distribution}
\end{proof}

We consider two applications of \autoref{cor:two_sided_randomness}, each with distinct constraints and setups.

First, we generalize the CHSH-game presented in \autoref{subsec:one_sided_randomness} to a two-sided setting. In this case, the setup is governed by the CHSH inequality \eqref{eq:CHSH}, and we examine a 2222-scenario where both Alice and Bob have two measurement inputs, each yielding one of two possible outcomes. The resulting conditional entropy values are similar to those obtained in \cite[Fig. 2]{Brown2024} and are visualized in \autoref{fig:results}. To establish upper bounds in this setting, we formulate an optimization problem as described in \cite{Bhavsar_2023}, enabling us to assess lower bounds on randomness asymptotically extractable under CHSH constraints.

A more advanced application involves a 2322-scenario constrained by a whole distribution, inspired by the experimental setup in \cite{Zhang2022}, which originally examined only one-sided randomness. Here, we adapt the setup for two-sided randomness extraction. In this scenario, Alice has two mutually unbiased measurement bases, while Bob has one measurement basis that aligns with Alice's first basis and two additional bases to achieve maximal Bell violation. This configuration allows us to study the randomness generated from both parties simultaneously.

For a concrete example, we consider an honest implementation using an entangled two-qubit state:
\begin{align}
    \ket{\Phi} = \frac{1}{\sqrt{2}} (\ket{00} + \ket{11}).
\end{align}
In this setup, Alice and Bob perform measurements in the $ x $-$ z $ plane with angles specified in \autoref{table:angles_Alice_Bob}. Each measurement corresponds to a projection operator $ P_{\alpha} $, defined by
\begin{align}
    P_{\alpha} \coloneqq \frac{1}{2}(\mathds{1} + \sin(\alpha) \sigma_x + \cos(\alpha) \sigma_z),
\end{align}
where $ \alpha $ represents the measurement angle and $ \sigma_x $ and $ \sigma_z $ are Pauli operators. This experimental configuration provides insights into randomness extraction when multiple measurement bases and entangled states are involved.

\begin{table}[h!]
    \centering
    \begin{tabular}{|c|c|c|}
        \hline
        &Alice & Bob\\
        \hline
        1& $0$ & $\pi/2$ \\
        \hline
        2& $\pi/2$ & $\pi/8$ \\
        \hline
        3& - & $5\pi/4$ \\
        \hline
    \end{tabular}
    \caption{The tables shows the angles for a protocol for randomness extraction in an honest implementation. Importantly, Alice bases are mutually unbiased. }
    \label{table:angles_Alice_Bob}
\end{table}
In the experiment corresponding to \autoref{table:angles_Alice_Bob}, randomness extraction is done in the following bases
\begin{align}
    H(AB\vert X = 0,Y = 0, Q_E).
\end{align}
We show in \autoref{fig:results} that we match at least the upper bound with $2$ random bits. Moreover, comparing this result with \cite[Fig. 3]{Brown2024} questions whether a third basis on Bob's side has an advantage. 

\section{Beyond CHSH}
\label{sec:beyondchsh}

CHSH has long been the primary Bell inequality used for device-independent randomness. This is mainly because the $2222$-scenarios enjoy the powerful advantage that the $C^\star$-algebra of two projections is well-understood and is essentially as intricate as qubit-based information theory. Many seminal results have been obtained by reducing the device-independent scenario on the algebra of two projections back to the qubit case \cite{Acn2007}.

However, one can easily conceive of protocols featuring significantly different arrangements of input-output scenarios, which then naturally give rise to different families of Bell inequalities whenever quantum nonlocality arises. In such cases, a straightforward reduction to qubits or even qudits is often not feasible, necessitating either the use of self-testing methods for specific distributions, states and measurements or the acceptance of the full generality of solving the optimization in \autoref{thm:npo_formulation}.

In a recent result \cite{riveradean2024deviceindependentquantumkeydistribution}, DIQKD protocols beyond qubits were investigated, proposing key rates for more general scenarios than the CHSH and $2222$-cases. However, computing bounds on these key rates using the Brown--Fawzi--Fawzi (BFF) method can be challenging with standard computational resources. For this reason, the results in \cite{riveradean2024deviceindependentquantumkeydistribution} were obtained by evaluating the min-entropy (i.e., by optimizing the guessing probability). With stronger computational resources, we present a comparison of our method with BFF for I3322 in \autoref{fig:comparison_entropy_i3322}, and with BFF for CGLMP in \autoref{fig:comparison_entropy_cglmp} and show that our method is on a comparable level faster and gives better bounds. 
Since our focus here is on the fundamental theoretical building blocks of entropy estimation in device-independent scenarios, we concentrate on calculating lower bounds on the asymptotically extractable randomness but in terms of bounds which are converging towards the conditional von Neumann entropy. 
As we will show numerically in the following, 
our tools are able to estimate the randomness for the CGLMP 2233-Bell inequality \cite{Collins_2002} and the I3322 Bell-inequality \cite{Collins_2004} significantly beyond the bounds from the min-entropy.
\begin{equation}\label{eq:min_entropy}
\begin{aligned}
    p_{\operatorname{guess}}&(A\vert E) = &\\
    \sup \ & \sum_{a \in A} \bra{\psi_{Q_A Q_B Q_E}} M_{a\vert \tilde{x}} C_a \ket{\psi_{Q_A Q_B Q_E}}& \\
   &\sum_{abxyi} c_{abxyi}\bra{\psi_{Q_A Q_B Q_E}}M_{a\vert x} N_{b\vert y}\ket{\psi_{Q_A Q_B Q_E}} \geq q_i&\\
   &\quad 1\leq i \leq m& \\
    &\sum_{a}M_{a\vert x} = \mathds{1}_{A}, \quad x \in \mathcal{X}&\\
    &\sum_{b} N_{b \vert y} = \mathds{1}_B, \quad y \in \mathcal{Y}&\\
    &M_{a\vert x} \geq 0 \quad a\in A, \ x \in \mathcal{X}& \\
    &N_{b\vert y} \geq 0 \quad b \in B, \ y\in \mathcal{Y}& \\
    &[M_{a\vert x},N_{b\vert y}] = [M_{a\vert x},C_{a^\prime}] = [N_{b\vert y},C_{a^\prime}] = 0& \\
    &\quad b\in B,\ a,a^\prime \in A, \ x \in \mathcal{X}, \ y\in \mathcal{Y}&\\
    &\sum_a C_a = \mathds{1}, \ C_a \geq 0, \quad a \in A&
\end{aligned}
\end{equation}
and the min entropy is then defined as 
\begin{align}
    H_{\operatorname{min}}(A\vert E) \coloneqq - \log p_{\operatorname{guess}}(A\vert E).
\end{align}
 
\subsection{Randomness from CGLMP}
Our first example concerns the CGLMP Bell inequality for a $2233$ scenario \cite[eqs.~(4) and (5)]{Collins_2002}. This is a Bell inequality for two parties, each having two measurement settings and three outcomes per setting. If we define, for generally correlated probability distributions, the expression
\begin{align}
  P(A_a = B_b + k) \coloneqq \sum_{j=0}^{d-1} P\bigl(A_a = j,\; B_b = (j + k)\bmod d\bigr),
\end{align}
then the CGLMP Bell inequality can be stated as
\begin{equation}
\begin{aligned}
I_{\operatorname{CGLMP}} \coloneqq \; &\Bigl[P(A_1 = B_1) + P(B_1 = A_2 + 1) + P(A_2 = B_2) \\
&\quad + P(B_2 = A_1)\Bigr] \\
&\; - \Bigl[P(A_1 = B_1 - 1) + P(B_1 = A_2) \\
&\quad + P(A_2 = B_2 - 1) + P(B_2 = A_1 - 1)\Bigr].
\end{aligned}
\end{equation}
To extract randomness, we consider the NPO problem from \eqref{eq:prototypical_optimization} with a single constraint—namely, the violation of the CGLMP inequality by a specific value. It is straightforward to see that the classical bound is $2$, while an outer bound for the quantum value can be readily computed using the NPA hierarchy, which is approximately $2.91485$. In \autoref{fig:results_higher_bell} we present our numerical results. Here, the min-entropy is computed from \eqref{eq:min_entropy} at NPA level $3$, and the bounds from \autoref{thm:main_theorem} are obtained at NPA level $2$ by interchanging the sum and the infimum, using a fine grid (around 30 points). The numerical computations can be completed in seconds to minutes. As shown in \autoref{fig:results_higher_bell}, the bounds for randomness extracted via \autoref{thm:main_theorem} are strictly higher when using the CGLMP inequality.

\begin{figure}[t]
    \subfloat[min-entropy vs. lower bounds from \autoref{thm:main_theorem} for CGLMP for 2233 scenario]{
      \begin{tikzpicture}
        \begin{axis}[
            width=7.5cm,
            height=6cm,
            xlabel = {CGLMP value},
            ylabel = {Randomness},
            legend style={at={(0.05,0.95)}, anchor=north west},
            scatter/classes={a={mark=o}}
        ]
          \addplot[
            only marks,
  scatter src=explicit symbolic,
  mark=*,
  mark size=1pt,
            color=PineGreen
          ] table {resultsdi/min_entropy_bounds_cglmp.txt};
          \addlegendentry{min-entropy \eqref{eq:min_entropy}}

          \addplot[
            only marks,
  scatter src=explicit symbolic,
  mark=*,
  mark size=1pt,
            color=MidnightBlue
          ] table {resultsdi/vonNeumann_entropy_bounds_cglmp.txt};
          \addlegendentry{\autoref{thm:main_theorem}}

        \end{axis}
      \end{tikzpicture}
    }

    \vspace{1em}

    \subfloat[min-entropy vs. lower bounds from \autoref{thm:main_theorem} for I3322 ]{
      \begin{tikzpicture}
        \begin{axis}[
            width=7.5cm,
            height=6cm,
            xlabel = {I3322 value},
            ylabel = {Randomness},
            legend style={at={(0.05,0.95)}, anchor=north west},
            scatter/classes={a={mark=o}}
        ]
          \addplot[
            only marks,
  scatter src=explicit symbolic,
  mark=*,
  mark size=1pt,
            color=PineGreen
          ] table {resultsdi/i3322_guessing.txt};
          \addlegendentry{min-entropy \eqref{eq:min_entropy}}

          \addplot[
            only marks,
  scatter src=explicit symbolic,
  mark=*,
  mark size=1pt,
            color=MidnightBlue,
          ] table {resultsdi/i3322_vN.txt};
          \addlegendentry{\autoref{thm:main_theorem}}

        \end{axis}
      \end{tikzpicture}
    }
    \caption{\justifying Lower bounds for the asymptotically extractable randomness $H(A\vert E)$ in a device independent scenario are shown in dependence of the violation of the Bell inequalities from \cite{Collins_2002} and the I3322 Bell inequality. The min-entropy values are obtained by optimizing \eqref{eq:min_entropy} at NPA level $3$, while the direct bounds are obtained with \autoref{thm:main_theorem} and an appropriate fine grid. The behaviour of the CGLMP inequality is qualitatively very similar to the well-known CHSH case, which we have considered in \autoref{fig:results}. In comparison to that, the I3322 Bell inequality needs a very high violation $\approx 4.83$ (where $4$ is the classical, $\approx 5.0035$ the quantum bound) for extraction of randomness. All results can be achieved on personal devices in the amount of minutes.}
    \label{fig:results_higher_bell}
\end{figure}

\subsection{Randomness from I3322}

In this section we consider the generalization of the \emph{six-state-protocol} originally proposed on qubits in \cite{Bru__1998}. The idea is to consider three PVM's per party with two outcomes and to generate key out of one of the three bases. In \cite{Bru__1998} it is shown that the protocol exhibits slightly improved rates if we consider a device dependent scheme on a qubit. To the best of our knowledge, there are no results for fully device independent lower bounds on the conditional von Neumann entropy in dependence of I3322 violation. We provide in \autoref{fig:results_higher_bell} bounds for the asymptotically extractable randomness of the I3322 inequality given by 
\begin{equation}\label{eq:i3322}
\begin{aligned}
    I_{3322} = \ &\langle A_1 B_3\rangle + \langle A_2 B_3\rangle + \langle A_3 B_1\rangle \\
    &+\langle A_3 B_2\rangle +\langle A_1 B_2\rangle +\langle A_2 B_1\rangle - \langle A_2 B_2\rangle \\
    &- \langle A_1 B_1\rangle + \langle A_1\rangle - \langle A_2\rangle + \langle B_1\rangle - \langle B_2\rangle.
\end{aligned}
\end{equation}
It is straightforward to see that the classical bound is $4$, while an outer bound for the quantum value can be readily computed using the NPA hierarchy, which is approximately $5.0035$. In \autoref{fig:results_higher_bell} we present our numerical results. Here, the min-entropy is computed from \eqref{eq:min_entropy} at NPA level $3$, and the bounds from \autoref{thm:main_theorem} are obtained at NPA level $2$ by interchanging the sum and the infimum, using a fine grid.

\section{Test on Experimental Data}
\label{sec:experiment}
In this section we report on the combination of our tools with a set of data measured in a real DI-QKD experiment in lab. For this purpose, we performed
an analysis of data taken in a early test run of the DI-QKD experiment \cite{Zhang2022} showcasing that
more measurements actually lead to higher certifiably randomness extraction rates in a realistic
environment. In order to gain insight into the asymptotically extractable randomness directly from the observed statistics—rather than first estimating a Bell inequality and then optimizing the conditional von Neumann entropy with respect to its violation—several steps are required. Real experimental data inevitably contain small errors and are therefore only approximately non-signalling. However, at the heart of the NPA hierarchy lies the assumption that we work within the commuting operator framework, which inherently imposes a form of non-signalling condition. As a result, simply applying an $\varepsilon$-threshold to the data may yield a bound corresponding to a \emph{not}-non-signalling distribution, yet this distribution might still be infeasible at a higher level of the NPA hierarchy. For this reason, when aiming to exploit the full informational content of the distribution, we compute the closest non-signalling distribution to the data and enforce it through equality constraints. Since the conditional von Neumann entropy is a continuous function, this approach does introduce a controllable error in the result, while substantially improving numerical stability.

In this section, we present the numerical results of a 2422-experiment, where we
have full access to the distribution $p(a,b \mid x,y)$. 
Table~\ref{tab:measurements} summarizes the outcomes for one-sided randomness extraction.
We compare two scenarios for applying Theorem~\ref{thm:main_theorem}: one uses fewer
grid points, and the other interchanges the sum and infimum but employs a large number
of grid points. Additionally, we examine two data-extraction strategies: one that uses
only the CHSH value from the statistics (approximately $2.5958$), and another that
utilizes the complete distribution. Our results show that, up to a precision of
$10^{-4}$, there is no significant difference between using only the CHSH value and
using the full statistics. In particular we can compare our results to the theoretical result given by \cite{Pironio_2010} which yields for a CHSH violation of $2.5958$ approximately $0.5759$ bits.
\begin{table}[ht]
    \centering
    \begin{tabular}{l|c|c|c}
    \toprule
         & A & B & C \\
    \midrule
    \multicolumn{4}{c}{\textbf{x values}} \\
    $x = 0$ & 0.3883 & 0 & 0.3915  \\
    $x = 1$ & 0.4178 & 0 & 0.4186 \\
    $x = 2$ & 0.6235 & 0.6235 & 0.6156 \\
    $x = 3$ & 0.5888 & 0.5888 & 0.5779  \\
    \midrule
    \multicolumn{4}{c}{\textbf{y values}} \\
    $y = 0$ & 0.6072 & 0.6069 & 0.5979 \\
    $y = 1$ & 0.6023 & 0.6023 & 0.5921 \\
    \bottomrule
    \end{tabular}
    \caption{In column A we summarize our certified randomness with $22$ grid points in \autoref{thm:main_theorem} but sum and infimum are interchanged and we have used the full statistics as equality constraints. In B we just incorporated the constraints regarding $x=2,3$ and $y=0,1$ which would correspond to statistics to calculate the CHSH value. In C we do not interchange sum and infimum, used the full statistic but use only $7$ grid points. All experiments are done at NPA level $2$ and include extra monomials of the form $M_{a\vert x}N_{b\vert y}P_k^{(a)}$. The results are rounded to a precision $\equiv10^{-4}$.}
    \label{tab:measurements}
\end{table}

As a last contribution we report on two-sided randomness extraction whereby we used $22$ grid points in \autoref{thm:main_theorem}, in which the sum and infimum operations were interchanged, and full statistical constraints were employed at NPA level $2$ with extra monomials of the form $M_{a\vert x}N_{b\vert y}P_k^{(a)}$. For the measurement settings $x=1$ and $y=0$, the global randomness value was found to be approximately $0.8520$. Comparing this with the results in \autoref{fig:results}, yields that it approximately fits exactly what we would expect from CHSH inequality. 
\section{Comparison to the method of Brown, Fawzi, and Fawzi}

In order to demonstrate the potentials our method offers, we will compare it to the method of  Brown, Fawzi, and Fawzi (BFF) \cite{Brown2024}. Conceptually there are some parallels, both methods build on lower bounding the von Neumann entropy by upper bounding a relative entropy and both methods use integral representation in order to get approximations. The method of Tan, Schwonnek et. al. \cite{Tan_2021}  follows a different ansatz, in this work we will leave a  comparison to this method aside. 
In the device-independent setting, the definition of the conditional von Neumann entropy becomes significantly more challenging, since its standard formulation via matrix logarithms cannot be used. At present, there are essentially two different integral representations of the relative entropy and, in the case of a classical conditioning system, the conditional von Neumann entropy:
Kosaki's formula \cite{Kosaki} and Frenkel's formula \cite{Frenkel2023}. Both formulations provide a way to define the conditional von Neumann entropy in a device-independent setting.

Kosaki's formula arises from an interpolation result and can be generalized to all operator monotone functions, whereas Frenkel's formula belongs to the category of $f$-divergences. Since both techniques rely on integration, they can be compared fairly in terms of their efficiency with respect to:
\begin{enumerate}
    \item the quality of the lower bound,
    \item the number of nodes required,
    \item runtime in a competitive example, and
    \item the size of the moment matrices.
\end{enumerate}

We provide evidence for the comparison with the help of the following examples. The computational experiments were performed on a system equipped with a 13th Gen Intel® Core™ i5-13600KF processor and 128 GB of DDR4 memory operating at a configured speed of 4000 MT/s (maximum rated speed 4800 MT/s).

\subsection{I3322 comparison}

\begin{figure}[ht]
\centering
\begin{tikzpicture}
\begin{axis}[
    width=7.5cm,
    height=6cm,
    xlabel={Bell Score},
    ylabel={Entropy Bound},
    legend style={at={(0.02,0.98)}, anchor=north west},
    grid=major,
    ticklabel style={/pgf/number format/fixed},
    yticklabel style={/pgf/number format/fixed},
]
\addplot+[
            only marks,
  scatter src=explicit symbolic,
  mark=*,
  mark size=1pt,
            mark options={fill=PineGreen, draw=PineGreen}
          ] table {
4.500000000000000000e+00 1.525420625967834193e-07
4.526315789473684070e+00 1.832295812193262072e-07
4.552631578947368141e+00 1.633093761120727730e-08
4.578947368421052211e+00 4.096159006704109750e-08
4.605263157894737169e+00 3.056808554740830166e-07
4.631578947368421240e+00 1.007118107620740284e-07
4.657894736842105310e+00 1.556618648188438670e-07
4.684210526315789380e+00 4.628460843424171199e-07
4.710526315789473450e+00 1.679092266547436808e-07
4.736842105263157521e+00 6.917980592345488899e-08
4.763157894736842479e+00 2.707090776005421314e-07
4.789473684210526550e+00 2.078099505297708110e-07
4.815789473684210620e+00 7.129359866559667470e-07
4.842105263157894690e+00 5.077313933653089528e-02
4.868421052631578760e+00 1.507041882949514966e-01
4.894736842105262831e+00 2.548069908578099452e-01
4.921052631578946901e+00 3.653230771229783458e-01
4.947368421052631859e+00 4.863166164148240433e-01
4.973684210526315930e+00 6.267923059480908776e-01
5.000000000000000000e+00 8.289748950857601661e-01
};
\addlegendentry{BFF Method, 8 notes}

\addplot+[
            only marks,
  scatter src=explicit symbolic,
  mark=*,
  mark size=1pt,
            mark options={fill=MidnightBlue, draw=MidnightBlue}
          ]  table {
4.500000000000000000e+00 5.566142840850529820e-08
4.526315789473684070e+00 6.674287529514537537e-08
4.552631578947368141e+00 4.673817592665566977e-08
4.578947368421052211e+00 3.585837480780858125e-08
4.605263157894737169e+00 2.497735542586209638e-08
4.631578947368421240e+00 2.258575856514707612e-08
4.657894736842105310e+00 5.229514482671883302e-08
4.684210526315789380e+00 4.834278523629390184e-08
4.710526315789473450e+00 8.294289031212347677e-08
4.736842105263157521e+00 1.018457288493961859e-07
4.763157894736842479e+00 2.066620577227681015e-07
4.789473684210526550e+00 3.915809009768076051e-07
4.815789473684210620e+00 2.101530889792948035e-07
4.842105263157894690e+00 5.453100315327730901e-02
4.868421052631578760e+00 1.630775206978070935e-01
4.894736842105262831e+00 2.774730572581514121e-01
4.921052631578946901e+00 4.001103690253762224e-01
4.947368421052631859e+00 5.355189636257428054e-01
4.973684210526315930e+00 6.918083402842137852e-01
5.000000000000000000e+00 8.996078030315310947e-01
};
\addlegendentry{This work, 8 notes}

\addplot+[
            only marks,
  scatter src=explicit symbolic,
  mark=*,
  mark size=1pt,
            mark options={fill=Orchid, draw=Orchid}
          ] table {
4.500000000000000000e+00 9.993443235977412632e-07
4.526315789473684070e+00 1.130905460029337147e-06
4.552631578947368141e+00 1.231645550246640375e-06
4.578947368421052211e+00 9.989124845666951646e-07
4.605263157894737169e+00 1.804081368511528893e-06
4.631578947368421240e+00 1.300789476835312812e-06
4.657894736842105310e+00 6.759637778506802336e-07
4.684210526315789380e+00 6.342034395237668287e-07
4.710526315789473450e+00 6.593301151480519976e-07
4.736842105263157521e+00 7.318030419188035790e-07
4.763157894736842479e+00 8.246937257198417598e-07
4.789473684210526550e+00 2.010547938413272138e-06
4.815789473684210620e+00 3.622032553172892168e-06
4.842105263157894690e+00 5.577336474914464104e-02
4.868421052631578760e+00 1.669470705250624132e-01
4.894736842105262831e+00 2.843067209410877760e-01
4.921052631578946901e+00 4.098533200957313882e-01
4.947368421052631859e+00 5.470615474499163078e-01
4.973684210526315930e+00 7.037549676769438989e-01
5.000000000000000000e+00 9.148099180169972922e-01
};
\addlegendentry{This work, 30 notes}

\end{axis}
\end{tikzpicture}
\caption{Comparison of entropy bounds obtained using the BFF method and the from this work as a function of the score. On this instance 8 notes is already significanly challening in runtime $\sim 10$s per point. As comparison our method with also 8 notes needs only $\sim 0.5$s (see \autoref{fig:comparison_runtime_i3322}) and gives higher entropies. We also benchmark our method with 30 notes. The runtime is on the order of $\sim 2$s and we can see that the entropy bound further increases.}
\label{fig:comparison_entropy_i3322}
\end{figure}

We investigate randomness extraction from correlations violating the $I_{3322}$ Bell inequality using two approaches: the Brown--Fawzi--Fawzi (BFF) method with relaxations (1), (3), and (4) from \cite[Rem.~2.6]{Brown2024}, and our proposed technique, for which the analogues of \cite[Rem.~2.6~(3),(4)]{Brown2024} can also be applied. In both methods, we employ $8$ nodes in the quadrature formula for the integration and work at NPA level~2, supplemented with the minimal number of level~3 moment matrix elements required to ensure that the objective functional is well defined. This setup results in moment matrices of size $110 \times 110$ for the BFF method and only $62 \times 62$ for our technique.  

In \autoref{fig:comparison_entropy_i3322}, we present the lower bounds obtained from \autoref{thm:npo_formulation} and BFF for both methods as a function of the Bell score. We conclude that they are on a comparable scale. In \autoref{fig:comparison_runtime_i3322} we provide a runtime comparison: the BFF method requires between approximately $3.58$~s and $18.70$~s per instance over the considered score range, while our method completes in $0.43$--$0.51$~s per instance. This demonstrates that, while both methods produce comparable bounds, our approach achieves a substantial speed-up---often exceeding an order of magnitude---due to the significantly smaller moment matrix size resulting from the simpler objective function and the fact that we can assume the operators in Eve's system to be projective.

\begin{figure}[ht]
\centering
\begin{tikzpicture}
\begin{axis}[
    width=7.5cm,
    height=6cm,
    xlabel={Bell Score},
    ylabel={Runtime (s)},
    legend style={at={(0.02,0.98)}, anchor=north west},
    grid=major,
    ticklabel style={/pgf/number format/fixed},
    yticklabel style={/pgf/number format/fixed},
]
\addplot+[
            only marks,
  scatter src=explicit symbolic,
  mark=*,
  mark size=1pt,
  mark options={fill=PineGreen, draw=PineGreen}
          ] table {
4.500000000000000000e+00 3.581722974777221680e+00
4.526315789473684070e+00 3.599066495895385742e+00
4.552631578947368141e+00 3.567398071289062500e+00
4.578947368421052211e+00 3.573879957199096680e+00
4.605263157894737169e+00 3.521507740020751953e+00
4.631578947368421240e+00 3.586855173110961914e+00
4.657894736842105310e+00 3.665650606155395508e+00
4.684210526315789380e+00 3.680586814880371094e+00
4.710526315789473450e+00 3.691456079483032227e+00
4.736842105263157521e+00 3.790045022964477539e+00
4.763157894736842479e+00 3.767492771148681641e+00
4.789473684210526550e+00 3.876279592514038086e+00
4.815789473684210620e+00 4.098931550979614258e+00
4.842105263157894690e+00 1.063062143325805664e+01
4.868421052631578760e+00 1.231175398826599121e+01
4.894736842105262831e+00 1.327487516403198242e+01
4.921052631578946901e+00 1.353223633766174316e+01
4.947368421052631859e+00 1.496057415008544922e+01
4.973684210526315930e+00 1.662683320045471191e+01
5.000000000000000000e+00 1.869566345214843750e+01
};
\addlegendentry{BFF Method}

\addplot+[
            only marks,
  scatter src=explicit symbolic,
  mark=*,
  mark size=1pt,
            mark options={fill=MidnightBlue, draw=MidnightBlue}
          ]  table {
4.500000000000000000e+00 5.439050197601318359e-01
4.526315789473684070e+00 4.331295490264892578e-01
4.552631578947368141e+00 4.357197284698486328e-01
4.578947368421052211e+00 4.377386569976806641e-01
4.605263157894737169e+00 4.431536197662353516e-01
4.631578947368421240e+00 4.425282478332519531e-01
4.657894736842105310e+00 4.438338279724121094e-01
4.684210526315789380e+00 4.549763202667236328e-01
4.710526315789473450e+00 4.478442668914794922e-01
4.736842105263157521e+00 4.430525302886962891e-01
4.763157894736842479e+00 4.425194263458251953e-01
4.789473684210526550e+00 4.401352405548095703e-01
4.815789473684210620e+00 4.602305889129638672e-01
4.842105263157894690e+00 4.986972808837890625e-01
4.868421052631578760e+00 4.772696495056152344e-01
4.894736842105262831e+00 4.796977043151855469e-01
4.921052631578946901e+00 4.835705757141113281e-01
4.947368421052631859e+00 4.904634952545166016e-01
4.973684210526315930e+00 4.951820373535156250e-01
5.000000000000000000e+00 5.140163898468017578e-01
};
\addlegendentry{This work}

\end{axis}
\end{tikzpicture}
\caption{Runtime comparison of the BFF method and our method (KS) for randomness extraction from the $I_{3322}$ Bell inequality, using $8$ quadrature nodes and NPA level~2 with minimal level~3 extensions.}
\label{fig:comparison_runtime_i3322}
\end{figure}

\subsection{CGLMP comparison}

\begin{figure}[ht]
\centering
\begin{tikzpicture}
\begin{axis}[
    width=7.5cm,
    height=6cm,
    xlabel={Bell Score},
    ylabel={Entropy Bound},
    legend style={at={(0.02,0.98)}, anchor=north west},
    grid=major,
    ticklabel style={/pgf/number format/fixed},
    yticklabel style={/pgf/number format/fixed},
]
\addplot+[
  only marks,
  scatter src=explicit symbolic,
  mark=*,
  mark size=1pt,
  mark options={fill=PineGreen, draw=PineGreen}
] table {
2.000000000000000000e+00 2.417946058498376044e-07
2.101650444444444599e+00 8.673873542137788828e-02
2.203300888888888753e+00 1.910194582520446382e-01
2.304951333333333352e+00 3.089408744284262709e-01
2.406601777777777951e+00 4.407043272696239966e-01
2.508252222222222105e+00 5.883230409721887932e-01
2.609902666666666704e+00 7.561738997464595435e-01
2.711553111111110859e+00 9.528266981321691365e-01
2.813203555555555457e+00 1.199827722737015323e+00
2.914854000000000056e+00 1.569934520548240897e+00
};
\addlegendentry{BFF Method}

\addplot+[
  only marks,
  scatter src=explicit symbolic,
  mark=*,
  mark size=1pt,
  mark options={fill=MidnightBlue, draw=MidnightBlue}
] table {
2.000000000000000000e+00 4.236928583976111177e-08
2.101650444444444599e+00 9.653850093870046389e-02
2.203300888888888753e+00 2.027749096690835418e-01
2.304951333333333352e+00 3.200449133717317429e-01
2.406601777777777951e+00 4.504305261528161464e-01
2.508252222222222105e+00 5.967845668300474360e-01
2.609902666666666704e+00 7.638111176401969349e-01
2.711553111111110859e+00 9.726495914068921111e-01
2.813203555555555457e+00 1.220866429721271462e+00
2.914854000000000056e+00 1.568220996868729955e+00
};
\addlegendentry{This work}

\end{axis}
\end{tikzpicture}
\caption{Comparison of entropy bounds obtained using the BFF method and the method from this work for the CGLMP inequality as a function of the score.}
\label{fig:comparison_entropy_cglmp}
\end{figure}

We investigate randomness extraction from correlations violating the CGLMP Bell inequality using two approaches: the Brown--Fawzi--Fawzi (BFF) method with relaxations (1), (3), and (4) from \cite[Rem.~2.6]{Brown2024}, and our proposed technique, for which the analogues of \cite[Rem.~2.6~(3),(4)]{Brown2024} can also be applied. In both methods, we employ $8$ nodes in the quadrature formula for the integration and work at NPA level~2, supplemented with the minimal number of level~3 moment matrix elements required to ensure that the objective functional is well defined. This setup results in moment matrices of size $230 \times 230$ for the BFF method and only $122 \times 122$ for our technique.  

In \autoref{fig:comparison_entropy_cglmp}, we present the lower bounds obtained from \autoref{thm:npo_formulation} and BFF for both methods as a function of the Bell score, demonstrating comparable scales. In \autoref{fig:comparison_runtime_cglmp} we provide a runtime comparison: across the considered score range, the BFF method requires approximately $69.05$--$242.81$\,s per instance, while our method completes in only $3.08$--$3.67$\,s per instance. This illustrates a substantial speed-up of our approach, attributable to the significantly smaller moment matrices resulting from a simpler objective functional and the ability to assume projective operators on Eve's system.

\begin{figure}[ht]
\centering
\begin{tikzpicture}
\begin{axis}[
    width=7.5cm,
    height=6cm,
    xlabel={Bell Score},
    ylabel={Runtime (s)},
    legend style={at={(0.02,0.98)}, anchor=north west},
    grid=major,
    ticklabel style={/pgf/number format/fixed},
    yticklabel style={/pgf/number format/fixed},
]
\addplot+[
  only marks,
  scatter src=explicit symbolic,
  mark=*,
  mark size=1pt,
  mark options={fill=PineGreen, draw=PineGreen}
] table {
2.000000000000000000e+00 6.905280351638793945e+01
2.101650444444444599e+00 1.271663630008697510e+02
2.203300888888888753e+00 1.599320271015167236e+02
2.304951333333333352e+00 1.793378789424896240e+02
2.406601777777777951e+00 1.937275681495666504e+02
2.508252222222222105e+00 1.927677116394042969e+02
2.609902666666666704e+00 2.173466718196868896e+02
2.711553111111110859e+00 2.218345603942871094e+02
2.813203555555555457e+00 2.428099732398986816e+02
2.914854000000000056e+00 1.085373771190643311e+02
};
\addlegendentry{BFF Method}

\addplot+[
  only marks,
  scatter src=explicit symbolic,
  mark=*,
  mark size=1pt,
  mark options={fill=MidnightBlue, draw=MidnightBlue}
] table {
2.000000000000000000e+00 2.862440347671508789e+00
2.101650444444444599e+00 3.515376806259155273e+00
2.203300888888888753e+00 3.587176322937011719e+00
2.304951333333333352e+00 3.670341014862060547e+00
2.406601777777777951e+00 3.490179777145385742e+00
2.508252222222222105e+00 3.497174501419067383e+00
2.609902666666666704e+00 3.590077877044677734e+00
2.711553111111110859e+00 3.546425342559814453e+00
2.813203555555555457e+00 3.510360956192016602e+00
2.914854000000000056e+00 3.083215951919555664e+00
};
\addlegendentry{This work}

\end{axis}
\end{tikzpicture}
\caption{Runtime comparison of the BFF method and our method (KS) for randomness extraction from the CGLMP inequality, using $8$ quadrature nodes and NPA level~2 with minimal level~3 extensions.}
\label{fig:comparison_runtime_cglmp}
\end{figure}

\section{Summary}

This work focuses on optimizing the conditional von-Neumann entropy within a device-independent framework. From a broader perspective, we ask how much randomness can be extracted from two spatially separated, hypothetical “black boxes” located in Alice’s and Bob’s respective labs, where the outcomes follow the statistical laws of quantum theory. Instead of specifying a concrete quantum experiment represented by a Hilbert space $ \mathcal{H} $ and a trace-class operator $ \rho $ on $ \mathcal{H} $, we rely on the achievable statistics defined by general quantum theory. The core principle underlying the presence of genuine randomness here is the monogamy of correlations, which restricts information sharing in non-signalling theories as demonstrated in \cite{Acn2007}. This principle remains applicable even with only limited knowledge of the output statistics in \eqref{eq:non_local_distribution}. Additionally, bounding the capabilities of Alice, Bob, and Eve to the framework of quantum theory is significant not only for applications but also for advancing our understanding of quantum principles. For instance, \cite{CerveroMartn2025} exploits the gap between general non-signalling theories and quantum theory, highlighting this difference as a powerful tool.

Our approach serves as a versatile framework for deriving lower bounds on the conditional von-Neumann entropy. The main technical contributions of our work include the application of the integral representation from \cite{Frenkel2023} adapted for this setting, as well as the formulation of a non-commutative polynomial optimization problem, which we solve using the NPA hierarchy \cite{Navascus2008}. This approach bridges the gap between two established methods: those in \cite{Brown2024} and \cite{Tan_2021}. The technique from \cite{Brown2024} has become standard in security proofs for device-independent quantum key distribution (DIQKD) and has since been extended to Petz-Rényi divergences \cite{hahn2024boundspetzrenyidivergencesapplications}. Compared to \cite{Brown2024}, our method requires fewer NPA variables, as our projections $ P_k^{(a)} $ in \autoref{thm:npo_formulation} reduce complexity relative to the non-Hermitian operators $ Z_{a,i} $ in \cite{Brown2024}. This reduction streamlines the operator set in the sum-of-squares cone and enables us to impose additional constraints within the kernel of the map $ \varphi $ in \cite[Fundamental Lemma]{koßmann2023hierarchies}. Moreover, our method builds on discretization techniques for integration from \cite{kossmann2024_optimization}, while \cite{Brown2024} relies on the Gauss-Radau quadrature method from numerical integration theory. However, the advantages of each integration method for the specific computations required remain open for investigation. The method in \cite{Tan_2021}, though distinct from these approaches, recasts the conditional von-Neumann entropy in \eqref{eq:conditional_von_neumann} as
\begin{align}
    H(A \vert Q_E)_{\rho_{AQ_E}} = S(T_{Q_A \to A}(\rho_{Q_A Q_B})) - S(\rho_{Q_A Q_B}),
\end{align}
interpreting it as an entropy production. Here, the channel $ T_{Q_A \to A} $ acts as the measurement channel on Alice’s system. The method in \cite{Tan_2021} establishes lower bounds through a variational approach, employing the Gibbs variational principle and Golden-Thompson inequalities. A direct comparison of these conceptually related integral-based methods with the entropy production approach would be a valuable avenue for future research.
In \autoref{sec:applications_numerical_examples}, we present multiple scenarios to illustrate the effectiveness of our method for randomness extraction. These scenarios include both numerical validations and theoretical comparisons to benchmark results from existing studies.

Firstly, we analyze a $ 2222 $ - scenario (two measurement settings per party, each with two outcomes) under CHSH constraints, which is a standard setting in quantum information theory for testing Bell inequalities. We consider this scenario for both one-sided and two-sided randomness extraction, where randomness is certified by outcomes that are either private to one party (one-sided) or to both parties (two-sided). Our results in this $ 2222 $ - scenario show strong agreement with the analytical values reported in \cite{Acn2007}, demonstrating the accuracy and robustness of our approach. Additionally, we explore a more intricate $ 2322 $ - scenario, where one party has three measurement settings while the other has two. This scenario is used to examine global randomness extraction, which involves jointly certifying randomness across all outcomes. Our findings, obtained under the statistical framework of Werner states, indicate that this more complex setup does not offer a substantial qualitative advantage over the $ 2222 $ - scenario. This suggests that increasing the measurement settings in this configuration may not necessarily enhance randomness extraction, a result that merits further investigation. An intriguing direction for future research would be to systematically compare different configurations - such as the $ 2222 $, $ 2322 $ and $ 3322$ - scenarios (where the latter allows each party three measurement settings with two outcomes) - to determine how these variations impact randomness generation under different assumptions of honest implementation or even a random key basis \cite{Schwonnek2021}. This could uncover new insights into the optimal design of randomness extraction protocols. Moreover, it would be valuable to examine randomness extraction in more complex non-local games, such as the magic square game discussed in \cite{CerveroMartn2025}. These games offer alternative frameworks for randomness certification that might provide unique advantages in terms of security and efficiency in quantum protocols.

\section{Acknowledgements}
GK thanks Mario Berta, Omar Fawzi, Ian George, Edwin Lobo, Stefano Pironio and Lewis Wooltorton for fruitful discussions. GK thanks Marco Tomamichel for hosting him at CQT Singapore in fall 2024. The author would like to thank Rutvij Bhavsar for insightful discussions and for providing access to the data from \cite{Bhavsar_2023}. GK acknowledges support from the Excellence Cluster - Matter and Light for Quantum Computing (ML4Q). GK acknowledges funding by the European Research Council (ERC Grant Agreement No. 948139). 
R.S.\ is supported  by the DFG under Germany's Excellence Strategy - EXC-2123 QuantumFrontiers - 390837967 and  SFB 1227 (DQ-mat) , the Quantum Valley Lower Saxony, and the BMBF projects ATIQ, SEQUIN, QuBRA and CBQD. 
The authors thank Harald Weinfurter and his research group for measuring the test data that was used for benchmarking our methods.    
\bibliography{main}
\begin{widetext}
\appendix

\section{\texorpdfstring{Proof of \autoref{thm:main_theorem}}{}}\label{appendix:proof_main_theorem}
The following is a generalization of \cite[Cor. 1]{Jencova2024} to positive trace-class operators.
\begin{lemma}\label{lem:jencova_trace_class}
Let $\rho,\sigma\in\mathcal P(\mathcal H)$ be positive trace-class operators such that there exist $\mu,\lambda\ge 0$ such that $\mu\,\sigma \le \rho \le \lambda\,\sigma$.
Then
\begin{equation}\label{eq:genCor1}
D(\rho\Vert\sigma)
= \tr[\rho -\sigma] + \int_{\mu}^{\lambda} \frac{ds}{s}\,\tr^-\!\big[\rho-s\sigma\big] + \tr[\rho] \ln\lambda - (\lambda-1)\tr[\sigma].
\end{equation}
\end{lemma}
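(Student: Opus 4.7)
The plan is to reduce the lemma to Jen\v{c}ov\'a's state version (the $\tr[\rho]=\tr[\sigma]=1$ case proved in \cite{Jencova2024}) by an elementary scaling argument. Assuming first that $r\coloneqq\tr[\rho]>0$ and $s\coloneqq\tr[\sigma]>0$, I set $\tilde\rho\coloneqq\rho/r$ and $\tilde\sigma\coloneqq\sigma/s$. Positive homogeneity of $x\mapsto x\ln x$ together with the identity $\ln(c\,A)=\ln c\cdot P_A+\ln A$ on the support $P_A$ of $A$ yield the rescaling relation
\[
D(\rho\Vert\sigma)=r\,D(\tilde\rho\Vert\tilde\sigma)+r\ln(r/s),
\]
while the operator inequality $\mu\sigma\le\rho\le\lambda\sigma$ becomes $\tilde\mu\,\tilde\sigma\le\tilde\rho\le\tilde\lambda\,\tilde\sigma$ with $\tilde\mu\coloneqq\mu s/r$ and $\tilde\lambda\coloneqq\lambda s/r$. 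Since $\tilde\rho,\tilde\sigma$ are states satisfying these bounds, Jen\v{c}ov\'a's state-level formula applies directly to $D(\tilde\rho\Vert\tilde\sigma)$.

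Next I would change variables in the resulting integral back to the un-normalized setting. Writing the state-side integrand as $\tilde\rho-u\tilde\sigma=r^{-1}(\rho-v\sigma)$ with $v\coloneqq ur/s$, one checks that $du/u=dv/v$ and $\tr^-[\tilde\rho-u\tilde\sigma]=r^{-1}\tr^-[\rho-v\sigma]$, while the limits $[\tilde\mu,\tilde\lambda]$ map back to $[\mu,\lambda]$. After multiplying by $r$ and combining with the rescaling relation, the logarithmic contributions $r\ln(s/r)$ (from $\ln\tilde\lambda=\ln\lambda+\ln(s/r)$) and $r\ln(r/s)$ cancel exactly, and the remaining constants reassemble into $\tr[\rho-\sigma]+\tr[\rho]\ln\lambda-(\lambda-1)\tr[\sigma]$ upon substituting back $r=\tr[\rho]$ and $s=\tr[\sigma]$. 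The degenerate cases are dispatched directly: if $\rho=0$ the LHS vanishes and, taking $\mu=0$, the identity $\tr^-[-v\sigma]=v\,\tr[\sigma]$ collapses the RHS to zero as well; if $\sigma=0$ the kernel condition $\ker\sigma\subseteq\ker\rho$ forces $\rho=0$, reducing to the previous case.

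The main obstacle I anticipate is the careful bookkeeping of the constants, reinforced by a sanity check that the right-hand side of \eqref{eq:genCor1} is invariant under admissible reparametrizations of $(\mu,\lambda)$. This invariance follows from the elementary observation that $\tr^-[\rho-v\sigma]=v\,\tr[\sigma]-\tr[\rho]$ whenever $v\ge\lambda$ and vanishes whenever $v\le\mu$, so differentiating the RHS with respect to $\lambda$ (respectively $\mu$) produces zero; this both certifies that the scaling substitution is consistent and permits comparing the formula across different normalizations of the integration interval. All infinite-dimensional trace-class aspects — measurability of $v\mapsto\tr^-[\rho-v\sigma]$, integrability near $v=0$, and the kernel-condition limit — are inherited from \cite{Jencova2024} for states and pass through the bounded linear scaling without further work.
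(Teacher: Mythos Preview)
Your proposal is correct but follows a genuinely different route from the paper. The paper's proof does not invoke the state case of \cite{Jencova2024} at all; instead it starts directly from Frenkel's integral representation \eqref{eq:integral_frenkel}, splits the integration domain at $t=0$ and $t=1$, performs the substitution $s=t/(t-1)$ on each piece to transform to integrals over $[\mu,1]$ and $[1,\lambda]$, and then uses the identity $\tr^+[X]=\tr^-[X]+\tr[X]$ to merge the two into a single $\tr^-$-integral plus explicit constant terms. Your argument, by contrast, treats Jen\v{c}ov\'a's state-level corollary as a black box and reaches the general positive trace-class case via the homogeneity relation $D(\rho\Vert\sigma)=r\,D(\tilde\rho\Vert\tilde\sigma)+r\ln(r/s)$ together with a change of variables $u\mapsto v=ur/s$ in the integral.

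Both approaches are sound. Yours is shorter and makes the dependence on \cite{Jencova2024} explicit, which is appropriate given how the paper itself frames the lemma as a generalization of that result. The paper's derivation is more self-contained (depending only on \cite{Frenkel2023}) and exhibits the mechanism behind the $[\mu,\lambda]$ truncation directly, which feeds into the later estimate in \autoref{thm:main_theorem}. Your invariance-under-reparametrization check (differentiating in $\lambda$ using $\tr^-[\rho-v\sigma]=v\tr[\sigma]-\tr[\rho]$ for $v\ge\lambda$) matches the paper's closing remark on independence of $\mu,\lambda$ almost verbatim, so the two proofs converge on that point. One minor note: in the degenerate case $\rho=0$, $\sigma\neq 0$, you should say that $\mu=0$ is \emph{forced} by $\mu\sigma\le 0$ rather than merely ``taken'', since the lemma claims the identity for every admissible pair $(\mu,\lambda)$.
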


\begin{proof}
We start from the integral representation of relative entropy valid for all positive
trace-class operators (see \cite{Frenkel2023}):
\begin{equation}\label{eq:IF}
D(\rho\Vert\sigma) = \tr[\rho-\sigma]
+ \int_{-\infty}^{\infty} \frac{|t|}{(1-t)^2}\,
\tr^-\!\big[(1-t)\rho+t\sigma\big]\,dt .
\end{equation}
Split the integral at $t=0$ and $t=1$.
For $t\le 0$ we have $1-t>0$ and
\begin{align}
((1-t)\rho+t\sigma)_{-}=(1-t)\bigl(\rho-\tfrac{t}{t-1}\sigma\bigr)_{-}.
\end{align}
With the substitution $s=\tfrac{t}{t-1}$ (which maps $(-\infty,0]\to[0,1]$), one checks that
\begin{align}
\int_{-\infty}^{0}\frac{|t|}{(1-t)^2}\tr^-[(1-t)\rho+t\sigma]\,dt
=\int_{0}^{1}\frac{ds}{s}\,\tr^-[\rho-s\sigma] .
\end{align}
Since $\mu\sigma\le \rho$, the integrand vanishes for $s\in[0,\mu]$, hence the last integral
equals $\int_{\mu}^{1}\frac{ds}{s}\tr^-[\rho-s\sigma]$.

For $t\ge 1$, we use
\begin{align}
((1-t)\rho+t\sigma)_{-}=((t-1)\rho-t\sigma)_{+}=(t-1)\bigl(\rho-\tfrac{t}{t-1}\sigma\bigr)_{+}.
\end{align}
With the same substitution $s=\tfrac{t}{t-1}$ (mapping $[1,\infty)\to[1,\infty)$),
\begin{align}
\int_{1}^{\infty}\frac{t}{(1-t)^2}\tr^-[(1-t)\rho+t\sigma]\,dt
= \int_{1}^{\infty}\frac{ds}{s}\,\tr^+[\rho-s\sigma] .
\end{align}
Because $\rho\le \lambda\sigma$, we have $(\rho-s\sigma)_{+}=0$ for all $s\ge \lambda$, so the
integral truncates to $\int_{1}^{\lambda}\frac{ds}{s}\tr^+[\rho-s\sigma]$.

Insert these two pieces in \eqref{eq:IF} to get
\begin{align}
D(\rho\Vert\sigma)=\tr[\rho-\sigma]
+\int_{\mu}^{1}\frac{ds}{s}\tr^-[\rho-s\sigma]
+\int_{1}^{\lambda}\frac{ds}{s}\tr^+[\rho-s\sigma] .
\end{align}
Now use the elementary identity for any self-adjoint $X$:
\(
\tr[X_{+}]=\tr[X_{-}]+\tr[X].
\)
With $X=\rho-s\sigma$ this gives
\begin{align}
\tr^+[\rho-s\sigma]
=\tr^-[\rho-s\sigma] + \tr[\rho-s\sigma] .
\end{align}
Hence
\begin{align}
\int_{1}^{\lambda}\frac{ds}{s}\tr^+[\rho-s\sigma]
=\int_{1}^{\lambda}\frac{ds}{s}\tr^-[\rho-s\sigma]
+\int_{1}^{\lambda}\frac{ds}{s}\bigl(\tr[\rho] - s\,\tr[\sigma]\bigr).
\end{align}
Combining with the $[\mu,1]$ piece collapses the two
$\tr^-[\rho-s\sigma]$-integrals into a single integral over $[\mu,\lambda]$, and the
elementary integral evaluates to
\begin{align}
\int_{1}^{\lambda}\frac{ds}{s}\bigl(\tr[\rho] - s\,\tr[\sigma]\bigr)
=\tr[\rho]\ln\lambda - (\lambda-1)\tr[\sigma].
\end{align}
This yields \eqref{eq:genCor1}.

Finally, independence of the choice of $\mu,\lambda$ follows from the truncation properties:
for $s\le \mu$ we have $(\rho-s\sigma)_{-}=0$, and for $s\ge \lambda$ we have
$(\rho-s\sigma)_{+}=0$. Equivalently, if $\lambda'\ge\lambda$ then on $(\lambda,\lambda']$ one
has $(\rho-s\sigma)_{-}=s\,\tr[\sigma]-\tr[\rho]$, so the change in
$\int\frac{ds}{s}\tr^-[\rho-s\sigma]$ is exactly canceled by the change in
$\tr[\rho]\ln\lambda - (\lambda-1)\tr[\sigma]$, and an analogous statement holds at the lower
limit $\mu$.
\end{proof}

Proof of \autoref{thm:main_theorem}:
\begin{proof}
Fix a partition $0<t_1<\cdots<t_r$ with $t_1=\mu$ and $t_r=\lambda$. Following the tools in \cite{kossmann2024_optimization}, set
\begin{align}\label{eq:def_yk}
    y_k := \sup_{0\leq P \leq \mathds{1}} \tr\big[P(\sigma t_k - \rho)\big] .
\end{align}
Then \cite{kossmann2024_optimization} yields the estimate
\begin{equation}\label{eq:appendix_1_proof_1}
\begin{aligned}
    \int_\mu^\lambda \frac{ds}{s}\, \tr^+[\sigma s - \rho]
    &\leq y_1\Big[\Big(1 + \frac{t_1}{t_2 - t_1}\Big)\ln \frac{t_2}{t_1} - 1\Big]
    + y_r\Big[ 1 - \frac{t_{r-1}}{t_r - t_{r-1}}\ln \frac{t_r}{t_{r-1}}\Big] \\
    &\qquad + \sum_{k=2}^{r-1} y_k\Big[\Big(1+ \frac{t_k}{t_{k+1}-t_k}\Big) \ln  \frac{t_{k+1}}{t_k}
      - \frac{t_{k-1}}{t_k - t_{k-1}}\ln \frac{t_k}{t_{k-1}}\Big] .
\end{aligned}
\end{equation}
It is convenient to encode the coefficients by
\begin{align*}
    \alpha_k := \begin{cases}
        -\Big[\Big(1 + \dfrac{t_1}{t_2 - t_1}\Big)\ln \dfrac{t_2}{t_1} - 1\Big] & k = 1,\\[4pt]
        -\Big[ 1 - \dfrac{t_{r-1}}{t_r - t_{r-1}}\ln \dfrac{t_r}{t_{r-1}}\Big] & k = r,\\[6pt]
        -\Big[\Big(1+ \dfrac{t_k}{t_{k+1}-t_k}\Big)\ln  \dfrac{t_{k+1}}{t_k}
          - \dfrac{t_{k-1}}{t_k - t_{k-1}}\ln \dfrac{t_k}{t_{k-1}}\Big] & \text{else},
    \end{cases}
\end{align*}
and
\begin{align*}
    \beta_k := \begin{cases}
        \Big[\Big(1 + \dfrac{t_1}{t_2 - t_1}\Big)\ln \dfrac{t_2}{t_1} - 1\Big]\, t_1 & k = 1,\\[4pt]
        \Big[ 1 - \dfrac{t_{r-1}}{t_r - t_{r-1}}\ln \dfrac{t_r}{t_{r-1}}\Big]\, t_r & k = r,\\[6pt]
        \Big[\Big(1+ \dfrac{t_k}{t_{k+1}-t_k}\Big)\ln  \dfrac{t_{k+1}}{t_k}
          - \dfrac{t_{k-1}}{t_k - t_{k-1}}\ln \dfrac{t_k}{t_{k-1}}\Big]\, t_k & \text{else}.
    \end{cases}
\end{align*}

For the interval $[0,\mu]$, use the convexity of $s\mapsto \tr^+[\sigma s-\rho]$ \cite[Lem.~1]{kossmann2024_optimization} and the fact that $\tr^+[\sigma s-\rho]\to 0$ as $s\to 0$ for positive $\sigma,\rho\in\mathcal P(\mathcal H)$. For $s\in[0,\mu]$,
\[
\tr^+[\sigma s-\rho] \le \Big(1-\frac{s}{\mu}\Big)\tr^+[\sigma\cdot 0-\rho] + \frac{s}{\mu}\tr^+[\sigma\mu-\rho]
= \frac{s}{\mu}\tr^+[\sigma\mu-\rho].
\]
Hence
\begin{equation*}
\begin{aligned}
    \int_0^\mu \frac{ds}{s}\, \tr^+[\sigma s-\rho]
    &\le \tr^+[\sigma\mu-\rho]
     = \sup_{0\le P\le \mathds{1}} \tr\big[P(\sigma\mu-\rho)\big],
\end{aligned}
\end{equation*}
which corresponds to the choice $\alpha_0=-1$ and $\beta_0=\mu$.

Inserting \eqref{eq:def_yk} into \eqref{eq:appendix_1_proof_1}, and using the definitions of $\alpha_k,\beta_k$ for $0\le k\le r$, we obtain with \autoref{lem:jencova_trace_class}
\begin{equation*}
\begin{aligned}
    D(\rho\Vert\sigma)
    \le \frac{1}{\ln 2}\Big(\tr[\rho-\sigma]
    + \sup_{0\le P_0,\ldots,P_r\le \mathds{1}}
      \sum_{k=0}^r \tr\big[P_k(\alpha_k\rho+\beta_k\sigma)\big]
    +\tr[\rho] \ln\lambda - (\lambda-1)\tr[\sigma]\Big),
\end{aligned}
\end{equation*}
which is exactly \eqref{eq:upper_estimate}. This proves the claim.
\end{proof}

\section{\texorpdfstring{Proof \autoref{thm:npo_formulation}}{}}\label{appendix:rewriting}

\begin{proof}
For simplicity we neglect all the way the whole constraints regadarding Bell-type expressions in \eqref{eq:optimization_problem_second_rewrite}. Those constraints are well-known and came along many times in similar calculations regarding the device independent bounds on Bell violations. We start with the following expression, which is basically the step in \eqref{eq:optimization_problem_first_rewrite}, which is certified by \autoref{thm:main_theorem} and $\rho_{Q_A Q_B Q_E}\in \mathcal{S}(\mathcal{H}_{Q_A}\otimes \mathcal{H}_{Q_B} \otimes \mathcal{H}_{Q_E})$
\begin{equation}\label{eq:proof_second_thm_1}
\begin{aligned}
    \inf_{\rho_{Q_A Q_B Q_E}} H(A\vert X = \tilde{x}, Q_E)_{\rho_{AQ_E\vert\tilde{x}}} = &\inf_{\rho_{Q_A Q_B Q_E}} -D(\rho_{A Q_E\vert \tilde{x}} \Vert \mathds{1}_A \otimes \rho_{Q_E})& \\
    \geq &\inf_{\rho_{Q_A Q_B Q_E}} \inf_{0\leq P_0,\ldots, P_r \leq \mathds{1}} \frac{1}{\ln 2}(\tr[\mathds{1}_A \otimes \rho_{Q_E} - \rho_{AQ_E\vert \tilde{x}}]& \\
     & \quad + \sum_{k=0}^r \tr[P_k(-\alpha_k\rho_{A Q_E \vert \tilde{x}} -\beta_k \mathds{1}_A \otimes \rho_{Q_E})].& 
\end{aligned}
\end{equation}
In a next step, we use the fact that 
\begin{align}\label{eq:proof_second_thm_2}
\tr[\mathds{1}_A \otimes \rho_{Q_E} - \rho_{AQ_E\vert \tilde{x}}] = \vert A\vert - 1
\end{align}
and that each of the $P_k$ is a projector in a classical quantum system such that we can write
\begin{align}\label{eq:proof_second_thm_projections}
    P_k = \sum_{a \in A}\ket{a}\bra{a} \otimes P_k^{(a)}.
\end{align}
For some positive operator $P_k^{(a)}$. Calculating $(P_k)^2$ then yields that even the $P_k^{(a)}$ have to be projections if $P_k$ is a projection. In addition we use the following calculation for $\rho_{AQ_E\vert \tilde{x}}$
\begin{equation}\label{eq:proof_second_thm_3}
\begin{aligned}
    \rho_{AQ_E\vert \tilde{x}} &= \sum_{a \in A} p(a\vert x) \ket{a}\bra{a}\otimes \rho_{Q_E\vert A = a, \tilde{x}} \\
    &= \sum_{a \in A} \ket{a}\bra{a} \otimes \tr_{Q_A Q_B}(\rho_{Q_A Q_B Q_E} M_{a\vert \tilde{x}}\otimes \mathds{1}_B).
\end{aligned}
\end{equation}
Before we proceed we briefly argue that of course Eve's marginal does not depend on $\tilde{x}$. This is the non-signalling property of quantum theory or one may calculate it directly
\begin{equation}\label{eq:proof_second_thm_4}
\begin{aligned}
    \rho_{Q_E\vert \tilde{x}} &= \tr_A[\rho_{AQ_E\vert\tilde{x}}] \\
    &= \tr_A[\sum_{a \in A} p(a\vert x) \ket{a}\bra{a}\otimes \rho_{Q_E\vert A = a, \tilde{x}}] \\
    &= \sum_{a \in A} \tr_{Q_A Q_B}[\rho_{Q_A Q_B Q_E} M_{a\vert \tilde{x}} \otimes \mathds{1}_{Q_B}] \\
    &= \tr_{Q_A Q_B}[\rho_{Q_A Q_B Q_E}\sum_{a\in A}M_{a\vert \tilde{x}}\otimes \mathds{1}_{Q_B}] \\
    &= \tr_{Q_A Q_B}[\rho_{Q_A Q_B Q_E} \mathds{1}_{Q_A} \otimes \mathds{1}_{Q_B}] \\
    &= \rho_{Q_E}.
\end{aligned}
\end{equation}
Now we insert \eqref{eq:proof_second_thm_2},\eqref{eq:proof_second_thm_3} and \eqref{eq:proof_second_thm_4} into \eqref{eq:proof_second_thm_1} to get 
    \begin{equation}
        \begin{aligned}
           \ldots &= \frac{1}{\ln 2}\big (\vert A\vert - 1 +  \inf_{\rho_{Q_A Q_B Q_E}} \inf_{P_0^a,\ldots,P_r^a} \sum_{k=0}^r - \alpha_k \tr[\sum_{a^\prime \in A} \ket{a^\prime}\bra{a^\prime} \otimes \tr_{Q_A Q_B}[\rho_{Q_A Q_B Q_E} M_{a\vert \tilde{x}} \otimes \mathds{1}_{Q_B Q_E}] \sum_{a \in A} \ket{a}\bra{a}\otimes P_k^{(a)}] \\
           &\hspace{6cm} + \sum_{k=0}^r -\beta_k \tr[\mathds{1}_A \otimes \rho_{Q_E} \sum_{a \in A} \ket{a}\bra{a}\otimes P_k^{(a)}] \big) \\
           &= \frac{1}{\ln 2}\big (\vert A\vert - 1 +  \inf_{\rho_{Q_A Q_B Q_E}} \inf_{P_0^a,\ldots,P_r^a} \sum_{k=0}^r \sum_{a\in A} - \alpha_k \tr[\ket{a}\bra{a} \otimes \tr_{Q_A Q_B}[\rho_{Q_A Q_B Q_E} M_{a\vert \tilde{x}} \otimes \mathds{1}_{Q_B Q_E}] P_k^{(a)}] \\
           & \hspace{6cm} + \sum_{k=0}^r\sum_{a\in A} -\beta_k \tr[\ket{a}\bra{a} \otimes \rho_{Q_E}P_k^{(a)}]\big) \\
           &= \frac{1}{\ln 2}\big (\vert A\vert - 1 +  \inf_{\rho_{Q_A Q_B Q_E}} \inf_{P_0^a,\ldots,P_r^a} \sum_{k=0}^r \sum_{a\in A} - \alpha_k \tr[\rho_{Q_A Q_B Q_E} M_{a\vert \tilde{x}} \otimes \mathds{1}_{Q_B} \otimes P_k^{(a)}] - \beta_k \tr[\rho_{Q_A Q_B Q_E} \mathds{1}_{Q_A}\otimes \mathds{1}_{Q_B} \otimes P_k^{(a)}]\big).
        \end{aligned}
    \end{equation}

Now using the fact that $\rho_{AQ_E\vert \tilde{x}}\leq \mathds{1}_A \otimes \rho_{Q_E}$ (due to the fact that the state is classical quantum) and thus $\lambda = 1$ yields $\tr [\rho] \ln\lambda - (\lambda-1)\tr[\sigma]= 0$. Moreover, as discussed in \autoref{thm:main_theorem} and the proof in \autoref{appendix:proof_main_theorem}, $\mu = 0$ can be handled without problems such that the constraints coming from the integral representation in \eqref{eq:integral_jencova} become trivial. Moreover, the operators $P_k^{(a)} \in \mathcal{B}(\mathcal{H}_{Q_E})$ such that they commute with all local measurements on Alice and Bob's side respectively. Adding the fact that they are projections from \eqref{eq:proof_second_thm_projections}, we conclude all the constraints stated in \autoref{thm:npo_formulation}. Observing the purity argument of $\rho_{Q_A Q_B Q_E}$ by Naimark's Theorem yields the assertion of the \autoref{thm:npo_formulation}, if we replace tensor products with commuting operators.
\end{proof}
\section{Global randomness from the full distribution}\label{appendix:global_distribution}

We show similarly to \autoref{appendix:rewriting} how to get lower bounds for 
$H(AB\vert X = \tilde{x},Y = \tilde{y},Q_E)$ and abbreviate steps similar to \autoref{appendix:rewriting}. Similarly, we neglect all Bell-type expressions in \eqref{eq:optimization_problem_second_rewrite}. Applying \autoref{thm:main_theorem} for $\rho_{Q_A Q_B Q_E}\in \mathcal{S}(\mathcal{H}_{Q_A}\otimes \mathcal{H}_{Q_B} \otimes \mathcal{H}_{Q_E})$ yields
\begin{equation}\label{eq:proof_second_thm_11}
\begin{aligned}
    \inf_{\rho_{Q_A Q_B Q_E}} H(A B \vert X = \tilde{x}, Y = \tilde{y}  Q_E)_{\rho_{AQ_E\vert\tilde{x}}} = &\inf_{\rho_{Q_A Q_B Q_E}} -D(\rho_{A B Q_E\vert \tilde{x}, \tilde{y}} \Vert \mathds{1}_{AB} \otimes \rho_{Q_E})& \\
    \geq &\inf_{\rho_{Q_A Q_B Q_E}} \inf_{0\leq P_0,\ldots, P_r \leq \mathds{1}} \frac{1}{\ln 2}(\tr[\mathds{1}_{AB} \otimes \rho_{Q_E} - \rho_{ABQ_E\vert \tilde{x},\tilde{y}}]& \\
     & \quad + \sum_{k=0}^r \tr[P_k(-\alpha_k\rho_{A B Q_E \vert \tilde{x},\tilde{y}} -\beta_k \mathds{1}_{AB} \otimes \rho_{Q_E})].& 
\end{aligned}
\end{equation}
Using the fact that 
\begin{align}\label{eq:proof_second_thm_21}
\tr[\mathds{1}_{AB} \otimes \rho_{Q_E} - \rho_{ABQ_E\vert \tilde{x},\tilde{y}}] = \vert A\vert \cdot \vert B \vert - 1
\end{align}
and that each of the $P_k$ is a projector in a classical quantum system, we rewrite
\begin{align}
    P_k = \sum_{a \in A,b \in B}\ket{a,b}\bra{a,b} \otimes P_k^{(ab)}.
\end{align}
with positive operators $P_k^{(ab)} \in \mathcal{P}(\mathcal{H}_E)$. Furthermore, we can apply the following calculations for $\rho_{ABQ_E\vert \tilde{x},\tilde{y}}$
\begin{equation}\label{eq:proof_second_thm_31}
\begin{aligned}
    \rho_{ABQ_E\vert \tilde{x},\tilde{y}} &= \sum_{a \in A, b \in B} p(a,b\vert x,y) \ket{a,b}\bra{a,b}\otimes \rho_{Q_E\vert A = a, B = b, \tilde{x},\tilde{y}} \\
    &= \sum_{a \in A, b \in B} \ket{a,b}\bra{a,b} \otimes \tr_{Q_A Q_B}(\rho_{Q_A Q_B Q_E} M_{a\vert \tilde{x}}\otimes N_{b\vert \tilde{y}}).
\end{aligned}
\end{equation}
Now we insert \eqref{eq:proof_second_thm_21} and \eqref{eq:proof_second_thm_31} into \eqref{eq:proof_second_thm_11} to get 
    \begin{equation}
        \begin{aligned}
           \ldots &= \frac{1}{\ln 2}\big (\vert A\vert \cdot \vert B \vert - 1 +  \inf_{\rho_{Q_A Q_B Q_E}} \inf_{P_0^a,\ldots,P_r^a} \sum_{k=0}^r - \alpha_k \tr[\sum_{a^\prime \in A, b^\prime \in B} \ket{a^\prime,b^\prime }\bra{a^\prime, b^\prime} \otimes \tr_{Q_A Q_B}[\rho_{Q_A Q_B Q_E} M_{a\vert \tilde{x}} \otimes N_{b\vert \tilde{y}}\otimes \mathds{1}_{Q_E}] \\
           &\hspace{3cm} \sum_{a \in A,b\in B} \ket{a,b}\bra{a,b}\otimes P_k^{(ab)}]  + \sum_{k=0}^r -\beta_k \tr[\mathds{1}_{AB} \otimes \rho_{Q_E} \sum_{a \in A,b \in B} \ket{a,b}\bra{a,b}\otimes P_k^{(ab)}] \big) \\
           &= \frac{1}{\ln 2}\big (\vert A\vert \cdot \vert B\vert - 1 +  \inf_{\rho_{Q_A Q_B Q_E}} \inf_{P_0^{ab},\ldots,P_r^{ab}} \sum_{k=0}^r \sum_{a\in A,b \in B} - \alpha_k \tr[\ket{a,b}\bra{a,b} \otimes \tr_{Q_A Q_B}[\rho_{Q_A Q_B Q_E} M_{a\vert \tilde{x}} \otimes N_{b\vert \tilde{y}}\otimes \mathds{1}_{Q_E}] P_k^{(ab)}] \\
           & \hspace{6cm} + \sum_{k=0}^r\sum_{a\in A,b  \in B} -\beta_k \tr[\ket{a,b}\bra{a,b} \otimes \rho_{Q_E}P_k^{(ab)}]\big) \\
           &= \frac{1}{\ln 2}\big (\vert A\vert \cdot \vert B \vert - 1 +  \inf_{\rho_{Q_A Q_B Q_E}} \inf_{P_0^{ab},\ldots,P_r^{ab}} \sum_{k=0}^r \sum_{a\in A,b \in B} - \alpha_k \tr[\rho_{Q_A Q_B Q_E} M_{a\vert \tilde{x}} \otimes N_{b\vert \tilde{y}} \otimes P_k^{(ab)}] \\
           &\hspace{4cm}- \beta_k \tr[\rho_{Q_A Q_B Q_E} \mathds{1}_{Q_A Q_B} \otimes P_k^{(ab)}]\big).
        \end{aligned}
    \end{equation}

Now using the fact that $\rho_{AB Q_E\vert \tilde{x},\tilde{y}}\leq \mathds{1}_{AB} \otimes \rho_{Q_E}$ (due to the fact that the state is classical quantum) and thus $\lambda = 1$ yields $\ln \lambda + 1  - \lambda = 0$. Adding all arguments from \autoref{appendix:rewriting} we deduce from \autoref{appendix:rewriting} a lower bound on two-sided global randomness 
\begin{align}
    H(AB\vert X=0,Y=0,Q_E) &\geq \frac{1}{\ln 2}\big (\vert A\vert + \vert B \vert - 1 +  \inf_{\rho_{Q_A Q_B Q_E}} \inf_{P_0^a,\ldots,P_r^a} \sum_{k=0}^r \sum_{a\in A, b \in B} - \alpha_k \tr[\rho_{Q_A Q_B Q_E} M_{a\vert \tilde{x}} \otimes N_{b\vert \tilde{y}} \otimes P_k^{(ab)}] \\
    & \hspace{3cm}- \beta_k \tr[\rho_{Q_A Q_B Q_E}  \mathds{1}_{Q_A Q_B} \otimes P_k^{(ab)}]\big).
\end{align}

\section{Algebraic Perspective on the NPA hierarchy}\label{sec:npa_hierarchy}

The formulation presented in \autoref{thm:npo_formulation} uses the language of Hilbert spaces and optimization, establishing a framework based on all possible Hilbert space representations for the physical system under consideration. However, for solving the optimization problem more effectively, it is advantageous to recast the problem within the framework of a $ C^\star $-algebra. To this end, we define the complex free $\star$-algebra, denoted $\mathcal{F}(\mathcal{G})$, generated by a set of symbols $\mathcal{G}$ as described in \cite{Ligthart_2023}. In our case, the generating set $\mathcal{G}$ is given by

\begin{equation}
    \mathcal{G} \coloneqq \{P_k^{a},M_{a\vert x},N_{b\vert y} \ \vert \ 0 \leq k \leq r, \ x \in X, \ y \in Y, \ a \in A, \ b \in B\}.
\end{equation}

According to \cite{Ligthart_2023}, we can impose a norm on the algebra $\mathcal{F}(\mathcal{G} \vert \mathcal{R})$ for a set of relations $\mathcal{R}$. This process allows us, via a completion procedure, to obtain a $ C^\star $-algebra denoted $ C^\star(\mathcal{G} \vert \mathcal{R}) $. Lemma 4 in \cite{Ligthart_2023} establishes that there is an equivalence between this algebraic formulation and an optimization over its representations.

Within a $ C^\star $-algebra framework, inner approximations of the positive cone are naturally represented as follows:

\begin{equation}
    \Sigma_2^{(1)} \subset \ldots \subset \Sigma_2^{(n)} \subset \Sigma_2^{(n+1)} \subset \ldots \subset C^\star(\mathcal{G} \vert \mathcal{R})^+.
\end{equation}

Here, the sequence $\{\Sigma_2^{(n)}\}$ represents nested inner approximations of the positive cone $ C^\star(\mathcal{G} \vert \mathcal{R})^+ $. Moving to the dual cones of each $\Sigma_2^{(n)}$, we arrive at a corresponding sequence:

\begin{equation}
    (\Sigma_2^{(1)})^\star \supset \ldots \supset (\Sigma_2^{(n)})^\star \supset (\Sigma_2^{(n+1)})^\star \supset \ldots \supset \mathcal{P}(C^\star(\mathcal{G} \vert \mathcal{R})),
\end{equation}

where $ C^\star(\mathcal{G} \vert \mathcal{R})^+ $ denotes the natural positive cone of the $ C^\star $-algebra $ C^\star(\mathcal{G} \vert \mathcal{R}) $, and $ \mathcal{P}(C^\star(\mathcal{G} \vert \mathcal{R})) $ denotes the space of positive, linear, and continuous functionals on $ C^\star(\mathcal{G} \vert \mathcal{R}) $. By choosing appropriate subsets $\Sigma_2^{(n)}$, we can ensure that the dual cone $ (\Sigma_2^{(n)})^\star $ is representable as a semidefinite program (SDP), facilitating a two-step solution approach.

First, we formulate the initial optimization problem as follows:

\begin{equation}
\begin{aligned}
    c^\star \coloneqq \inf \ &\omega(F_0) \\
    \text{s.t.} \ &\omega(F_i) \leq f_i \quad 1 \leq i \leq m \\
    &\omega \in \mathcal{P}(C^\star(\mathcal{G} \vert \mathcal{R})).
\end{aligned}
\end{equation}

In the first relaxation, we approximate the problem by replacing $\mathcal{P}(C^\star(\mathcal{G} \vert \mathcal{R}))$ with the dual cone $ (\Sigma_2^{(n)})^\star $, yielding

\begin{equation}
\begin{aligned}
    c^\star \geq \inf \ &\omega(F_0) \\
    \text{s.t.} \ &\omega(F_i) \leq f_i \quad 1 \leq i \leq m \\
    &\omega \in (\Sigma_2^{(n)})^\star.
\end{aligned}
\end{equation}

In the second step, we exploit the SDP-representability of $ (\Sigma_2^{(n)})^\star $ within a finite-dimensional matrix algebra, constructed via a set of operators $\{K_j\}$. By mapping each condition $ \{F_i\} $ to specific matrices in this SDP framework, we arrive at the final SDP formulation:

\begin{equation}\label{eq:relaxed_problem}
\begin{aligned}
    \inf \ &\operatorname{tr}(\rho M_{F_0}) \\
    \text{s.t.} \ &\operatorname{tr}(\rho M_{F_i}) \geq f_i \quad 1 \leq i \leq m \\
    &\operatorname{tr}(\rho K_j) = 0 \quad 1 \leq j \leq n \\
    &\rho \geq 0.
\end{aligned}
\end{equation}

An illustrative example of this methodology is the sum-of-squares (SOS) cones, which were initially introduced in \cite{Navascus2008}. For SOS cones, the Python package described in \cite{Wittek_2015} automates the translation of generators and relations into an SDP within a matrix algebra framework.

\section{Additional discussion for the numerics}\label{appendix:numerics}
We repeat in this section the resulting optimization problem from \autoref{thm:npo_formulation}
\begin{equation}
\begin{aligned}
H(A\vert X = \tilde{x},Q_E) \geq \inf \ &\frac{1}{\ln 2}(\vert A\vert - 1 + \sum_{k=0}^r \sum_{a \in A} \bra{\psi_{Q_A Q_B Q_E}} -\alpha_k M_{a\vert x}P_k^{(a)} - \beta_k P_k^{(a)}\ket{\psi_{Q_A Q_B Q_E}})& \\
   &\sum_{abxyi} c_{abxyi}\bra{\psi_{Q_A Q_B Q_E}}M_{a\vert x} N_{b\vert y}\ket{\psi_{Q_A Q_B Q_E}} \geq q_i \quad 1\leq i \leq m& \\
    &\sum_{a}M_{a\vert x} = \mathds{1}_{A}, \quad x \in \mathcal{X}&\\
    &\sum_{b} N_{b \vert y} = \mathds{1}_B, \quad y \in \mathcal{Y}&\\
    &M_{a\vert x} \geq 0 \quad a\in A, \ x \in \mathcal{X}& \\
    &N_{b\vert y} \geq 0 \quad b \in B, \ y\in \mathcal{Y}& \\
    &[M_{a\vert x},N_{b\vert y}] = [M_{a\vert x},P_k^{(a)}] = [N_{b\vert y},P_k^{(a)}] = 0 \quad b\in B,\ a \in A, \ x \in \mathcal{X}, \ y\in \mathcal{Y}, \ 0 \leq k \leq r&\\
    &(P_k^{(a)})^2 = P_k^{(a)}, \quad (P_k^{(a)})^\star = P_k^{(a)} \quad 1\leq k \leq r, \ a \in A.&
\end{aligned}
\end{equation}
As discussed in \autoref{sec:npa_hierarchy}, the relations 
\begin{equation}
    \begin{aligned}
        &\sum_{a}M_{a\vert x} = \mathds{1}_{A}, \quad x \in \mathcal{X}&\\
    &\sum_{b} N_{b \vert y} = \mathds{1}_B, \quad y \in \mathcal{Y}&\\
    &M_{a\vert x} \geq 0 \quad a\in A, \ x \in \mathcal{X}& \\
    &N_{b\vert y} \geq 0 \quad b \in B, \ y\in \mathcal{Y}& \\
    &[M_{a\vert x},N_{b\vert y}] = [M_{a\vert x},P_k^{(a)}] = [N_{b\vert y},P_k^{(a)}] = 0 \quad b\in B,\ a \in A, \ x \in \mathcal{X}, \ y\in \mathcal{Y}, \ 0 \leq k \leq r&\\
    &(P_k^{(a)})^2 = P_k^{(a)}, \quad (P_k^{(a)})^\star = P_k^{(a)} \quad 1\leq k \leq r, \ a \in A.&
    \end{aligned}
\end{equation}
become equality conditions in \eqref{eq:relaxed_problem} expressed within the $\{K_j\}$. We are now left with the objective function

\begin{align}\label{eq:objective}
    \frac{1}{\ln 2} \left( \vert A\vert - 1 + \sum_{k=0}^r \sum_{a \in A} \bra{\psi_{Q_A Q_B Q_E}} -\alpha_k M_{a\vert x}P_k^{(a)} - \beta_k P_k^{(a)} \ket{\psi_{Q_A Q_B Q_E}} \right),
\end{align}

along with some associated constraints. As discussed in \cite[Rem. 2.6]{Brown2024}, we can implement several optimizations to accelerate the numerical evaluations:

\begin{enumerate}
    \item Modifying the infimum and one of the summations in \eqref{eq:objective} provides lower bounds, reducing the number of required operators $ P_k^{(a)} $ to just $ \Vert A \Vert $. This adjustment significantly accelerates the numerical computations, allowing for the use of a grid with any desired precision.
    
    \item We expand the generating set for the sum-of-squares cone and NPA level $ 2 $ by incorporating additional monomials. In alignment with the approach in \cite{Brown2024}, we include elements of the form $ P_k M_{a\vert x} N_{b\vert y} $, which enriches the algebraic structure and improves convergence properties.
    
    \item Finally, we can set the NPA matrices to be real-valued, which simplifies computations further.

\end{enumerate}

Applying these optimizations enables all computations to complete within seconds.

\end{widetext}
\end{document}